\def\arcr{\@arraycr}
\newcommand{\showDOI}[1]{\unskip}
\providecommand{\customgenericname}{}
\newcommand{\newcustomtheorem}[2]{%
  \newenvironment{#1}[1]
  {%
   \renewcommand\customgenericname{#2}%
   \renewcommand\theinnercustomgeneric{##1}%
   \innercustomgeneric
  }
  {\endinnercustomgeneric}
}
\newcommand{\secref}[1]{Sec.~\ref{#1}} %
\newcommand{\Specsharp}{%
	{\settoheight{\dimen0}{C}Spec\kern-.05em \resizebox{!}{\dimen0}{\raisebox{\depth}{\#}}}}
\newcommand{\Csharp}{%
	{\settoheight{\dimen0}{C}C\kern-.05em \resizebox{!}{\dimen0}{\raisebox{\depth}{\#}}}}
\newcommand{\fun}[1]{\operatorname{#1}}
\newcommand{\DOM}{\fun{dom}}
\definecolor{blue-violet}{rgb}{0.54, 0.17, 0.89}
\definecolor{dark-cyan}{HTML}{135579}
\definecolor{magenta}{HTML}{a8264f}
\lstdefinelanguage{Neutral}%
{morekeywords={abstract,%
  case,catch,char,class,%
  def,else,extends,final,finally,for,%
  if,import,implicit,%
  match,module,%
  new,null,%
  object,override,%
  package,private,protected,public,%
  for,public,return,super,%
  this,trait,try,type,%
  val,var,%
  with,while,%
  yield,%
  let,end,%
	in,fun,alloc,inc%
  },%
  mathescape=true,%
  sensitive,%
  keywordstyle={\color{black}\bf\ttfamily},%
  commentstyle=\color{OliveGreen},%
  escapebegin=\color{OliveGreen},
  morecomment=[l]//,%
  morecomment=[s]{/*}{*/},%
  morecomment=[s][\color{darkgray}]{@}{\ },%
  morestring=[b]",%
  morestring=[b]',%
  showstringspaces=false%
}[keywords,comments,strings]%
\lstdefinelanguage{OOPSLA21}%
{morekeywords={abstract,%
  case,catch,char,class,%
  def,else,extends,final,finally,for,%
  if,import,implicit,%
  match,module,%
  new,null,%
  object,override,%
  package,private,protected,public,%
  for,public,return,super,%
  this,throw,trait,try,type,%
  val,var,%
  with,while,%
  yield,%
  let,end,%
	in,fun,alloc,inc%
  },%
  mathescape=true,%
  sensitive,%
  keywordstyle={\color{magenta}\bf\ttfamily},%
  commentstyle=\color{magenta},%
  escapebegin=\color{magenta},
  morecomment=[l]//,%
  morecomment=[s]{/*}{*/},%
  morecomment=[s][\color{magenta}]{@}{\ },%
  morestring=[b]",%
  morestring=[b]',%
  showstringspaces=false%
}[keywords,comments,strings]%
\lstdefinelanguage{PolyRT}%
{morekeywords={abstract,%
  case,catch,char,class,%
  def,else,extends,final,finally,for,%
  if,import,implicit,%
  match,module,%
  new,null,%
  object,override,%
  package,private,protected,public,%
  for,public,return,super,%
  this,throw,trait,try,type,%
  val,var,%
  with,while,%
  yield,%
  let,end,%
	in,fun,alloc,inc%
  },%
  mathescape=true,%
  sensitive,%
  keywordstyle={\color{dark-cyan}\bf\ttfamily},%
  commentstyle=\color{dark-cyan},%
  escapebegin=\color{dark-cyan},%
  morecomment=[l]//,%
  morecomment=[s]{/*}{*/},%
  morecomment=[s][\color{dark-cyan}]{@}{\ },%
  morestring=[b]",%
  morestring=[b]',%
  showstringspaces=false%
}[keywords,comments,strings]%
\newcommand{\untrack}[0]{^{\bot}}
\newcommand{\trackset}[1]{^{\texttt{\{#1\}}}}
\newcommand{\track}[0]{^{\varnothing}}
\newcommand{\trackvar}[1]{^{\texttt{#1}}}
\newcommand{\trackfresh}{^{\texttt{\ensuremath\vardiamondsuit}}}
\newcommand{\maybelang}{\ensuremath{\lambda^{\vardiamondsuit}}\xspace}
\newcommand{\polylang}{\ensuremath{\mathsf{F}_{<:}^{\vardiamondsuit}}\xspace}
\newcommand{\Type}[1]{\ensuremath{\mathsf{#1}}}
\newcommand{\Var}{\Type{Var}}
\newcommand{\TRef}{\Type{Ref}}
\newcommand{\TTop}{\Type{Top}}
\newcommand{\tref}{\mathsf{ref}}
\newcommand{\TUnit}{\Type{Unit}}
\newcommand{\tunit}{\mathsf{unit}}
\newcommand{\Loc}{\Type{Loc}}
\newcommand{\ty}[2][]{\ensuremath{\ifthenelse{\isempty{#1}}{#2}{#2^{\,#1}}}}
\newcommand{\TAll}[6]{\ensuremath{\forall f(\ty[#2]{#1} <: \ty[#4]{#3}) . \ty[#6]{#5}}}
\newcommand{\TLam}[5]{\ensuremath{\Lambda f(\ty[#2]{#1}) . {#5}}}
\newcommand{\TApp}[3]{\ensuremath{{#1}\ [\ty[#3]{#2}]}}
\newcommand{\ts}[1][]{\ensuremath{\ifthenelse{\isempty{#1}}{\,\vdash\,}{\,\vdash^{\,#1}\,}}}
\newcommand{\flt}{\ensuremath{\varphi}}
\newcommand{\cx}[2][]{\ensuremath{\ifthenelse{\isempty{#1}}{#2}{#2^{\,#1}}}}
\providecommand{\G}{G} %
\renewcommand{\G}[1][]{\cx[#1]{\Gamma}}
\newcommand{\HLBox}[2][teal!12]{\ensuremath{\mathchoice%
  {\setlength{\fboxsep}{.5ex}\colorbox{#1}{$\displaystyle#2$}}%
  {\setlength{\fboxsep}{.5ex}\colorbox{#1}{$\textstyle#2$}}%
  {\setlength{\fboxsep}{.5ex}\colorbox{#1}{$\scriptstyle#2$}}%
  {\setlength{\fboxsep}{.5ex}\colorbox{#1}{$\scriptscriptstyle#2$}}}}%
\newcommand{\QFresh}{\ensuremath{\vardiamondsuit}}
\newcommand{\qbot}{\ensuremath{\varnothing}}
\newcommand{\qfresh}{\ensuremath{\vardiamondsuit}}
\newcommand{\subq}{\ensuremath{\subseteq}}
\newcommand{\qlub}{\ensuremath{\cup}}
\newcommand{\qglb}{\ensuremath{\cap}}
\newsavebox{\SMALLSTAR}
\savebox{\SMALLSTAR}{\(\raisebox{.25ex}{\(\qfresh\)}\)}
\newcommand{\starred}[1]{\ensuremath{\mathord{\scalerel*{\usebox{\SMALLSTAR}}{*}}#1}}
\newsavebox{\OVRLP}
\savebox{\OVRLP}{$\raisebox{.37ex}[0pt][0pt]{$\mathrlap{\hspace{.415ex}\scaleobj{.5}{\vardiamondsuit}}$}\cap$}
\def\overlap{\ensuremath{\mathbin{\scalerel*{\usebox{\OVRLP}}{\sqcap}}}}
\newcommand{\qsat}[1]{\ensuremath{#1\mathord{*}}}
\newcommand{\WF}[1]{\ensuremath{#1\ \mathsf{ok}}}
\newcommand{\reaches}{\ensuremath{\mathrel{\leadsto}}}
\newcommand{\BOX}[1]{\fbox{$\strut #1$}}
\newcommand{\FV}{\ensuremath{\operatorname{fv}}}
\newcommand{\FTV}{\ensuremath{\operatorname{ftv}}}
\newcommand{\vgap}{\vspace{7pt}}
\newcommand{\bao}[1]{{\color{magenta}#1}}
\newcommand{\baotype}[2]{{\color{magenta}{\lstinline[language=OOPSLA21,mathescape=true]{#1}\ensuremath{{#2}}}}}
\newcommand{\baoterm}[1]{{\lstinline[language=OOPSLA21]{#1}}}
\newcommand{\hole}[1]{\ensuremath{[\,#1\,]}}
\newcommand{\CX}[3][black]{\ensuremath{{\color{#1}#2\ifthenelse{\isempty{#3}}{}{\hole{{\color{black}#3}}}}}}
\newcommand{\ie}{{\em i.e.}\xspace}
\newcommand{\rulename}[1]{(\textsc{#1})} %
\newcommand{\Fsub}{\ensuremath{\mathsf{F}_{<:}}\xspace}
\newcommand{\artifact}{\url{https://github.com/TiarkRompf/reachability}\xspace}
\lstdefinelanguage{DOT}%
{morekeywords={val,new},%
  sensitive,%
  morecomment=[l]//,%
  morecomment=[s]{/*}{*/},%
  morestring=[b]",%
  morestring=[b]',%
  showstringspaces=false%
}[keywords,comments,strings]%
\newlength{\trulemargin}
\newlength{\trulewidth}
\newlength{\srulewidth}
\newenvironment{trules}{$\vspace{0.5em}\ba{p{\trulemargin}@{~}p{\trulewidth}@{~}p{\trulemargin}}}{\ea$}
\newenvironment{srules}{$\vspace{0.5em}\ba{p{\trulemargin}@{~}p{\srulewidth}}}{\ea$}
\newcommand{\ba}{\begin{array}}
\newcommand{\ea}{\end{array}}
\newcommand{\ei}{\end{array}}
\newcommand{\bcases}{\left\{\begin{array}{ll}}
\newcommand{\ecases}{\end{array}\right.}
\newcommand{\eg}{{\em e.g.}\xspace}
\newcommand{\dom}{\mbox{\sl dom}}
\newcommand{\judgement}[2]{{\textsf{\textbf{#1}}} \hfill #2}
\begin{document}

\title[Polymorphic Reachability Types]{Polymorphic Reachability Types: Tracking Freshness, Aliasing, and Separation in Higher-Order Generic Programs}

\author{Guannan Wei}
\affiliation{
  \department{Department of Computer Science}              %
  \institution{Purdue University}            %
  \city{West Lafayette}
  \state{IN}
  \country{USA}                    %
}
\email{guannanwei@purdue.edu}          %

\author{Oliver Bra\v{c}evac}
\affiliation{
  \department{Department of Computer Science}              %
  \institution{Purdue University}            %
  \city{West Lafayette}
  \state{IN}
  \country{USA}                    %
}
\email{bracevac@purdue.edu}          %

\author{Songlin Jia}
\affiliation{
  \department{Department of Computer Science}              %
  \institution{Purdue University}            %
  \city{West Lafayette}
  \state{IN}
  \country{USA}                    %
}
\email{jia137@purdue.edu}          %

\author{Yuyan Bao}
\affiliation{
  \department{School of Computer and Cyber Sciences}              %
  \institution{Augusta University}            %
  \city{Augusta}
  \state{GA}
  \country{USA}                    %
}
\email{yubao@augusta.edu}          %

\author{Tiark Rompf}
\affiliation{
  \department{Department of Computer Science}              %
  \institution{Purdue University}            %
  \city{West Lafayette}
  \state{IN}
  \country{USA}                    %
}
\email{tiark@purdue.edu}          %

\begin{abstract}
Fueled by the success of Rust, many programming languages are
adding substructural features to their type systems. The promise of tracking
properties such as lifetimes and sharing is tremendous, not just for
low-level memory management, but also for controlling higher-level resources
and capabilities. But so are the difficulties in adapting successful
techniques from Rust to higher-level languages,
where they need to interact with other advanced features,
especially various flavors of functional and type-level abstraction.
Hence, recent proposals such as Scala's Capture Types target far
narrower domains than Rust. But what would it take
to bring full-fidelity reasoning about lifetimes and sharing to
mainstream languages? Reachability types are a recent proposal
that has shown promise in scaling to higher-order but
monomorphic settings, tracking aliasing and separation on top of a
substrate inspired by separation logic.
The \(\lambda^*\) reachability type system qualifies types with sets of
reachable variables and guarantees separation if two terms have disjoint
qualifiers.
However, naive extensions with type polymorphism and/or precise reachability
polymorphism are unsound, making \(\lambda^*\) unsuitable for adoption in real languages.
Combining reachability and type polymorphism that is precise, sound, and
parametric remains an open challenge.

This paper presents a rethinking of the design of reachability tracking and
proposes a solution to the key challenge of reachability polymorphism.
Instead of always tracking the transitive closure of reachable variables
as in the original design, we only track variables reachable in a single step
and compute transitive closures only when necessary, thus preserving
chains of reachability over known variables that can be
refined using substitution.
To enable this property, we introduce a new freshness qualifier, which
indicates variables whose reachability sets may grow during evaluation
steps.
These ideas yield the simply-typed \maybelang-calculus with precise
lightweight, \ie, quantifier-free, reachability polymorphism, and
the \polylang-calculus with bounded parametric polymorphism over types
and reachability qualifiers.
We prove type soundness and a preservation of separation property in Coq.
We show that our system subsumes both previous reachability type
systems as well as the essence of Scala's capture types, making true
tracking of lifetimes and sharing practical for mainstream languages.

 \end{abstract}

\begin{CCSXML}
<ccs2012>
<concept>
<concept_id>10011007.10011006.10011008</concept_id>
<concept_desc>Software and its engineering~General programming languages</concept_desc>
<concept_significance>500</concept_significance>
</concept>
<concept>
<concept_id>10003456.10003457.10003521.10003525</concept_id>
<concept_desc>Social and professional topics~History of programming languages</concept_desc>
<concept_significance>300</concept_significance>
</concept>
</ccs2012>
\end{CCSXML}

\ccsdesc[500]{Software and its engineering~General programming languages}
\ccsdesc[300]{Social and professional topics~History of programming languages}

\maketitle

\lstMakeShortInline[keywordstyle=,%
                    flexiblecolumns=false,%
                    language=PolyRT,
                    basewidth={0.56em, 0.52em},%
                    mathescape=true,%
                    basicstyle=\footnotesize\ttfamily]@

\section{Introduction} \label{sec:intro}

Type systems based on ownership and borrowing are seeing increasing
practical adoption, most prominently for ensuring memory safety in
comparatively low-level ``systems languages'' such as Rust \cite{DBLP:conf/sigada/MatsakisK14}.
But what about higher-level languages, specifically
those that rely to a larger degree on functional and type-level
abstraction (\eg, Scala and OCaml)?

Tracking substructural properties such as lifetimes and sharing
in the type system holds great promise, not only for low-level
memory management, but also for managing a variety of other
resources (\eg, files, network sockets, access tokens, mutex locks, etc.), for
tracking effects (\eg, via capabilities for exceptions, algebraic
effects, continuations, callbacks via async/await, etc.), as
well as for compiler optimizations (\eg, fine-grained dependency
analysis~\cite{oopsla23}, safe destructive updates, etc.).
Therefore, it is no surprise that several mainstream languages
are moving in this direction with experimental proposals backed
by serious engineering efforts, which are to a large degree inspired by
the success of Rust (\eg, Linear Haskell
\cite{DBLP:journals/pacmpl/BernardyBNJS18} and Scala Capture Types
\cite{DBLP:journals/corr/abs-2105-11896, DBLP:conf/scala/OderskyBBLL21, DBLP:journals/corr/abs-2207-03402}).

However, these proposals all focus on relatively narrow substructural
properties rather than attempting to model lifetimes
and sharing with similar generality as Rust's ownership and
borrowing approach.
For example, the Linear Haskell extension specifically tracks multiplicity of
uses, and the Scala Capture Types extension specifically targets
effect capabilities.
Of course, this is neither neglect, nor coincidence, but the observable effect
of an underlying hard problem: ownership type systems
\cite{noble1998flexible,DBLP:conf/oopsla/ClarkePN98, DBLP:series/lncs/ClarkeOSW13}
that would enable tracking more sophisticated lifetime properties traditionally
rely on strict heap invariants (selectively relaxed via borrowing
\cite{hogg1991islands}) that are difficult to enforce in the presence of
pervasive functional and type-level abstraction (see \Cref{fig:counter}
for an example).

\subsubsection*{Reachability Types}

Reachability types~\cite{DBLP:journals/pacmpl/BaoWBJHR21} are a recently
introduced close cousin to ownership types and nephew to separation logic
\cite{DBLP:conf/lics/Reynolds02, DBLP:conf/csl/OHearnRY01},
which have shown potential to bring more of the benefits of ownership type
systems to high-level languages.
The key idea of reachability types is to track reachability and aliases
as type qualifiers, which is best demonstrated by a code example with
ML-style references (types shown as comments):
\begin{lstlisting}[language=OOPSLA21]
  val x = new Ref(0)  // : Ref[Int]$\tracksetl{x}$
  val y = x           // : Ref[Int]$\tracksetl{x, y}$
\end{lstlisting}
Qualifiers are sets of identifiers attached to types.
Variable @x@ is bound to a \emph{freshly} allocated reference; its type
qualifier tracks only @x@ itself.  When @y@ is bound to @x@, the type
qualifier of @y@ tracks both @x@ and @y@, indicating that both can be
reached from @y@.

Previous work~\cite{DBLP:journals/pacmpl/BaoWBJHR21} has shown how
reachability types elegantly support functional abstraction beyond what
is available in Rust. For example,
\Cref{fig:counter} shows a program with escaping functions that can track the
sharing of locally-defined resources, which cannot be expressed under
Rust's ``shared XOR mutable'' constraint.
In \Cref{fig:counter}, we define a @counter@ function that returns a
pair of functions to increase or decrease a mutable variable.
Both the ``increase'' and ``decrease'' functions \emph{capture} the local heap-allocated
reference cell @c@ and \emph{escape} from @c@'s defining scope.
Once escaped, the name @c@ is not meaningful in the outer scope.
Reachability types use the outer function's self-reference @p@ to model this
escaping behavior and preserve the tracking of shared resources.
In contrast, Rust does not allow two functions to capture the same variable in
a mutable way, unless using dynamic reference counting to bypass the static
ownership discipline.

\begin{figure}[t]
\begin{lstlisting}[language=Neutral]
def counter(n: Int) = {             // counter: Int => $\mu$p.Pair[(()=>Unit)$\tracksetl{p}$, (()=>Unit)$\tracksetl{p}$]$\trackl$
  val c = new Ref(n)                //  : Ref[Int]$\tracksetl{c}$
  (() => c += 1, () => c -= 1)      //  : Pair[(()=>Unit)$\tracksetl{c}$, (()=>Unit)$\tracksetl{c}$]$\tracksetl{c}$
}
                                    // instantiate the self-reference $\mathit{p}$ with bound name $\mathit{ctr}$:
val ctr = counter(0)                //  : Pair[(()=>Unit)$\tracksetl{ctr}$, (()=>Unit)$\tracksetl{ctr}$]$\tracksetl{ctr}$
                                    // name $\mathit{ctr}$ abstracts over its captured variables:
val incr = fst(ctr)                 //  : (()=>Unit)$\tracksetl{ctr}$
val decr = snd(ctr)                 //  : (()=>Unit)$\tracksetl{ctr}$
\end{lstlisting}
\caption{
  An example (from \cite{DBLP:journals/pacmpl/BaoWBJHR21})
  demonstrating first-class functions supported by reachability types.
  The \lstinline{counter} function returns two closures over a shared
  mutable reference (which is a fresh value before binding it to \lstinline{c}).
  The return value is a pair typed with a self-reference \lstinline{p}
  to express the capture of \lstinline{c} by both closures.
  The self-reference introduced by the $\mu$-notation is similar to DOT,
  but reachability types desugar it into function types (cf.  \Cref{sec2:polydata}
  for the encoding and \Cref{sec:maybe:syntax} for the formal syntax).
  Rust's type system prevents returning closures over local mutable references due to the
  ``shared XOR mutable'' restriction, and has to resort to dynamic reference counting
  to implement similar functionality.
}
\label{fig:counter}
\end{figure}

The idea of tracking reachability at the type level gives rise to powerful
reasoning capabilities --- most importantly, when considering the absence of
reachability, namely \emph{separation}. Two terms are separate when their type
qualifiers are disjoint. The metatheory of reachability types guarantees not
only preservation of types but also preservation of separation: if two
expressions have disjoint qualifiers, they will evaluate to disconnected object
graphs at runtime.
Taking reachability and separation as the fundamental building blocks
of a type system stands in contrast to traditional ownership type systems
that put heap invariants about unique access paths first and selectively
re-introduce sharing via borrowing.
Crucially, reachability and separation appear as more fundamental properties in
the sense that formal accounts of Rust's type
system~\cite{DBLP:journals/pacmpl/0002JKD18} are typically expressed using
separation logic as the meta-language.

\subsubsection*{Limitations of $\lambda^*$} While the reachability type system $\lambda^*$
presented by \citeauthor{DBLP:journals/pacmpl/BaoWBJHR21} has shown key
advances with regards to reasoning about lifetimes and sharing in the presence
of functional abstraction, there are still significant gaps on the way to
smoothly integrating reachability types into real-world high-level languages
such as Scala and OCaml, especially regarding type abstraction
and polymorphic data types that are missing from $\lambda^*$.

The key obstacle on the way is $\lambda^*$'s treatment of untracked values and fresh values.
In real-world programs, not all values need to have their reachability tracked (\eg,
pure functions and non-resource values).
However, keeping tracked and untracked values apart is difficult when crossing
abstraction boundaries.
While the $\lambda^*$-calculus does support
untracked values, it conflates untracked values and fresh values, where fresh
values are tracked values (\eg, allocations) but have not been bound to known
variables.
In $\lambda^*$, untracked values can be upcast to be fresh. Although being
sound, it comes at a loss in precision, which means that code cannot be generic
over the tracking status of arguments (see \Cref{sec:revisit} for detailed
examples).
This conflation of untracked and fresh values in the $\lambda^*$-calculus
unfortunately leads to the following important limitations in expressiveness:

\begin{itemize}[leftmargin=2em]
  \item The type system provides no abstraction over tracked and untracked types,
    \eg, a function cannot work over both tracked and untracked arguments
    while faithfully tracking their reachability.
  \item Qualifier-dependent function applications can only have shallow
    dependencies, \ie, the argument name can only occur in the outermost
    qualifier of the return type.
  \item Partly due to the first restriction, $\lambda^*$ does not support parametric
    polymorphism for either types or qualifiers.
  \item Due to the conflation of untracked and fresh values, $\lambda^*$ cannot
    support nested mutable references in the base system without extending it
    first with a flow-sensitive effect system and move semantics.
\end{itemize}
This paper overcomes the above limitations of reachability types and proposes
new variants of reachability types that track fine-grained lifetime properties
for higher-order, imperative, and polymorphic languages.
By rethinking reachability tracking and proposing a novel notion of
freshness, we show how systems like $\lambda^*$ can smoothly support
precise reachability polymorphism and type abstraction.

\begin{figure}[t]
  \centering
  \begin{subfigure}[b]{0.32\textwidth}
    \centering
    \includegraphics[trim={1cm 1cm 2.5cm 1cm},clip,width=0.9\textwidth]{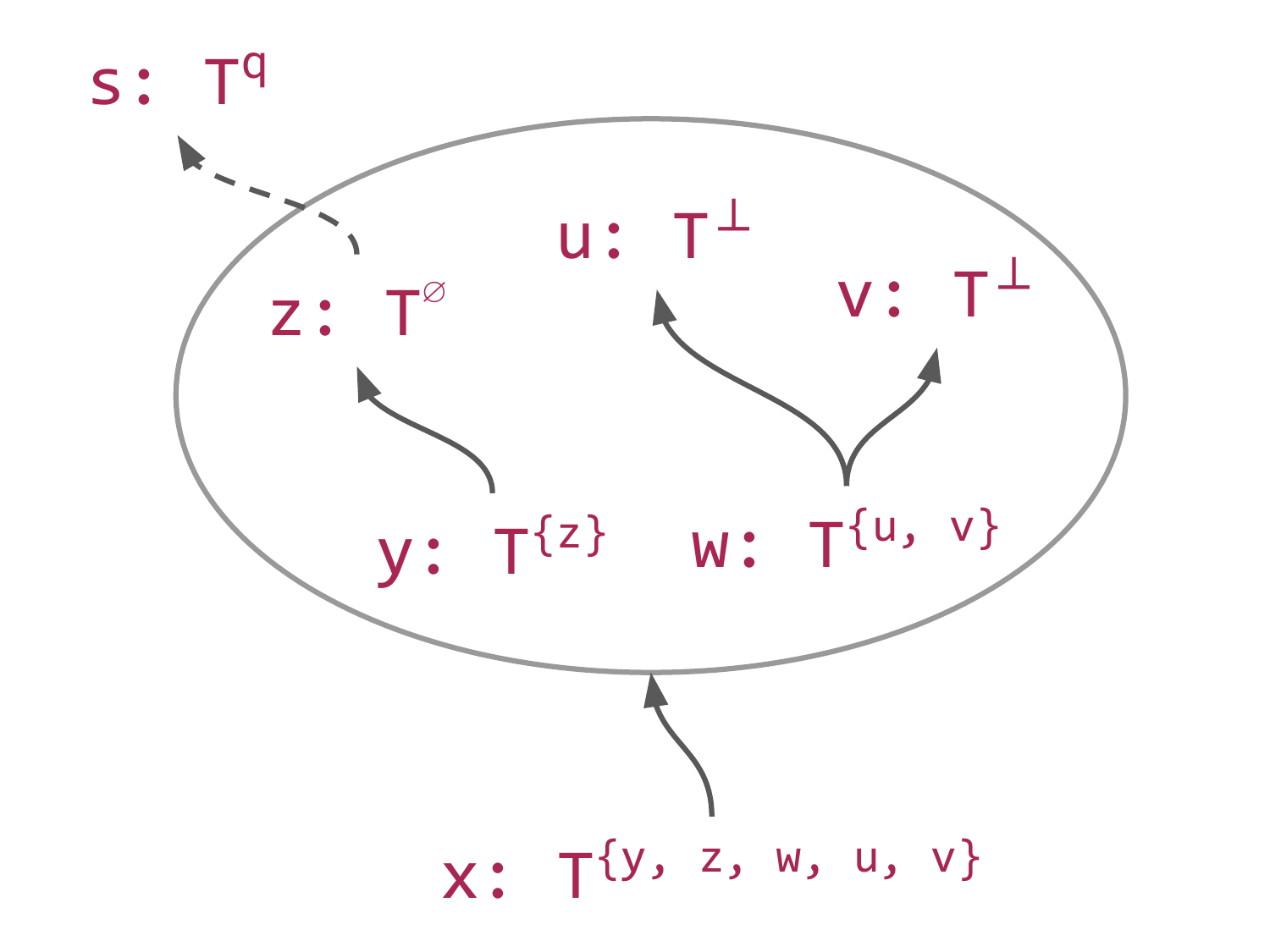}
    \caption{$\lambda^*$ \cite{DBLP:journals/pacmpl/BaoWBJHR21} tracks all reachable variables transitively.
    Leaf nodes are untracked ($\bot$ in $\lambda^*$). }
    \label{fig:transreach}
  \end{subfigure}
  \hfill
  \begin{subfigure}[b]{0.32\textwidth}
    \centering
    \includegraphics[trim={1cm 1cm 2cm 1cm},clip,width=0.9\textwidth]{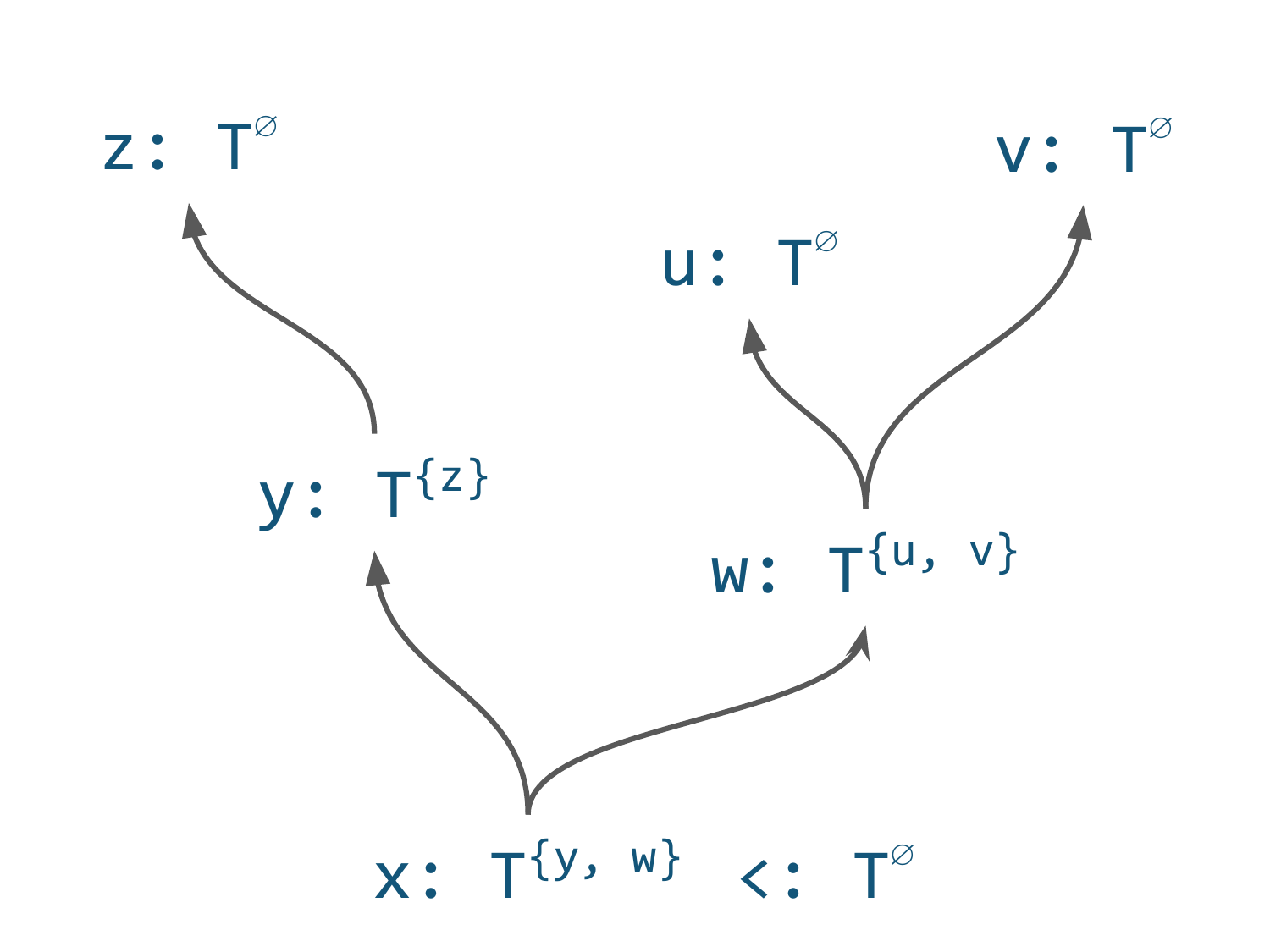}
    \caption{$\maybelang$ tracks one-step reachability by default.
    \texttt{x: T}$\trackset{y, w}$ can be upcast to untracked \texttt{x: T}$\track$.}
    \label{fig:onestepreach}
  \end{subfigure}
  \hfill
  \begin{subfigure}[b]{0.32\textwidth}
    \centering
    \includegraphics[trim={1cm 1cm 2cm 1cm},clip,width=0.9\textwidth]{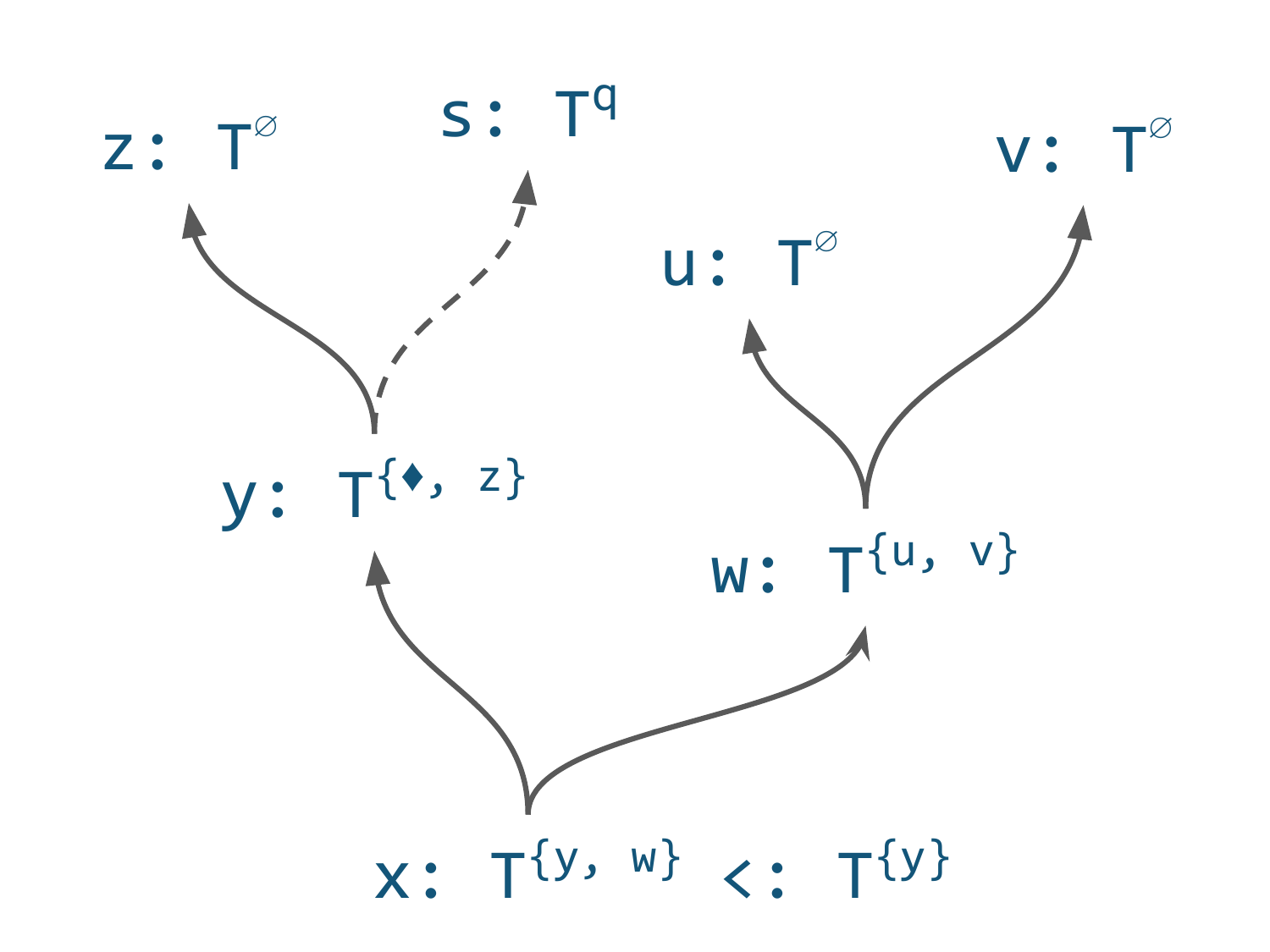}
    \caption{$\maybelang$ models freshness using the $\vardiamondsuit$ marker, which
    prevents further upcasting via subtyping (beyond \texttt{y}).}
    \label{fig:onestepfresh}
  \end{subfigure}
  \vspace{1ex}
  \caption{Illustration and comparison of different reachability tracking
  mechanisms.
  We use solid lines for direct reachability, and dashed lines for
  reachability that is unobservable in the current context (cf. \Cref{sec:freshness}).
  \ref{fig:transreach} illustrates prior work by \citet{DBLP:journals/pacmpl/BaoWBJHR21},
  \ref{fig:onestepreach} reflects both this work and Scala capture types
  \cite{DBLP:journals/corr/abs-2207-03402},
  and \ref{fig:onestepfresh} illustrates the unique feature of this work,
  which prevents upcasting through ``fresh'' variables, and thus allows substituting fresh variables with \emph{larger} (but observably separate)
  reachability sets during evaluation.
  Thus, this work subsumes the essential aspects of both $\lambda^*$ (separation) and capture types (qualifier refinement using subtyping).
  }
  \vspace{-3ex}
  \label{fig:one_step_vs_trans}
\end{figure}

\subsubsection*{Preserving Chains of Reachability}
In \citeauthor{DBLP:journals/pacmpl/BaoWBJHR21}'s system, qualifiers are
assigned to include all transitively reachable variables, \ie, the reachability
sets are eagerly saturated.  However, this is not always necessary and leads to
precision loss when the reachability set of a variable is refined to a smaller
set using substitution. If the reachability set containing the variable is
transitively saturated, the now superfluous elements cannot be removed, unless
one would recompute the transitive closure from scratch.

In contrast, the new design proposed in this paper tracks one-step reachability
by default, and only computes transitively saturated reachability sets on
demand, \eg, before computing intersections to check separation (see
\Cref{sec2:saturation}).  The two different mechanisms are illustrated in
\Cref{fig:one_step_vs_trans}.  In \Cref{fig:onestepreach},
{\color{dark-cyan}@x@} only tracks its immediate reachable variables, namely
{\color{dark-cyan}@{y, w}@}, whereas in \Cref{fig:transreach}, $\lambda^*$
tracks all variables that can be transitively reached from
{\color{magenta}@x@}.

Importantly, tracking one-step reachability preserves chains of reachability,
which allow us to maintain higher precision across substitution, both as part
of dependent function application and during reduction steps.  This approach
yields a new notion of ``maybe-tracked'' values, whose tracking status solely
depends on other variables from the context.  For example, by refining
reachability through the subtyping relation (see \Cref{sec:motiv:maybe}), @x@'s
reachability set in \Cref{fig:onestepreach} can be ``upcast'' to the empty set,
which precisely reflects its true untracked status.

\subsubsection*{A New Freshness Notion}
In \citeauthor{DBLP:journals/pacmpl/BaoWBJHR21}'s system, untracked values are
represented with the $\bot$ qualifier and fresh values with the $\varnothing$
qualifier, indicating an empty set of reachable variables.  Fresh values are
tracked but not observably aliased in the context.  Untracked values can be
upcast to fresh values, but not vice versa. Treating a tracked value as
untracked would be a soundness violation.

Instead of classifying values as \emph{untracked} or \emph{tracked}, we propose
to classify them as \emph{potentially fresh} or \emph{definitely non-fresh}.
To this end, we introduce an explicit freshness marker $\vardiamondsuit$ in
qualifiers for fresh values. With that addition, untracked values are naturally
assigned the empty reachability set, thus eliminating $\bot$ in the new system.
The freshness marker in qualifiers indicates that the expression may reach
unobservable variables or locations, which will materialize during evaluation.
Since $\vardiamondsuit$ signifies a statically unknown reachability set, it
serves as a barrier in subtyping chains, so that one cannot upcast beyond
$\vardiamondsuit$.
\Cref{fig:onestepfresh} shows such an example where upcasting is blocked
by the freshness marker on @y@. However, it is still possible to eliminate @w@ in
the qualifier @{y,w}@ since its leaf nodes are in fact untracked entities.

\subsubsection*{Polymorphic Reachability Types}
With the new mechanism for tracking reachability chains and the new freshness notion, we
present the $\maybelang$-calculus based on the simply-typed $\lambda$-calculus.
Compared to \citeauthor{DBLP:journals/pacmpl/BaoWBJHR21}'s $\lambda^*$, the
$\maybelang$-calculus addresses the fundamental expressiveness
limitations in $\lambda^*$ from above:
$\maybelang$ features precise reachability polymorphism without explicit
quantification, it supports deep dependencies in qualifier-dependent
applications, and supports nested references.
Furthermore, on top of the $\maybelang$-calculus, we develop extensions
with bounded quantification over types and qualifiers, leading to the
$\polylang$-calculus that can express polymorphic data types.
Polymorphic data such as pairs in $\polylang$ track precise
reachability of their components, which is not supported in $\lambda^*$.

\subsubsection*{Contributions}
We summarize our contributions as follows:
\begin{itemize}[leftmargin=2em]
  \item We identify the root issues in prior work leading to imprecise
    reachability tracking and address them by preserving transitive chains of
    reachability based on a more explicit ``freshness'' representation.
    We explain the key ideas and demonstrate the new type system informally
    with examples (\Cref{sec:motiv}).
  \item We present the formal theory and metatheory of (1) the $\maybelang$-calculus with precise
    reachability polymorphism that improves over
    \citet{DBLP:journals/pacmpl/BaoWBJHR21}'s $\lambda^*$-calculus
    (\Cref{sec:maybe}), and (2) the $\polylang$-calculus with bounded
    type-and-qualifier abstraction as an \Fsub-style extension of $\maybelang$
    (\Cref{sec:poly}). We prove type soundness and preservation of separation
    property for both calculi.
  \item We demonstrate that our system enables richer expressiveness in
    programming with capabilities compared to Scala capture types (\Cref{sec:scalacapture}).
    Our system thus subsumes both the original reachability types $\lambda^*$ \cite{DBLP:journals/pacmpl/BaoWBJHR21}
    and the essence of capture types \cite{DBLP:journals/corr/abs-2207-03402}.
  \item We have mechanized the metatheory of \maybelang and \polylang in Coq, including
  all the results and examples in this paper.
      We have also implemented a prototype implementation that can
    typecheck the examples in the paper.
    The Coq mechanization and prototype can be found at \artifact.
\end{itemize}
\Cref{sec:related} discusses related work and \Cref{sec:conclusion} concludes the paper.

\section{Key Ideas and Motivating Examples} \label{sec:motiv}

\newcommand{\tabcode}[1]{\lstinline[language=OOPSLA21,basicstyle=\tt\footnotesize]!#1!}
\newcommand{\tabcodepoly}[1]{\lstinline[language=PolyRT,basicstyle=\tt\footnotesize]!#1!}

\begin{table}[t]\footnotesize
\caption{Overview and comparison of $\lambda^*$ and this work. ``--'' indicates
  there is no equivalent notion in the system.
  The \lstinline{id} function is the polymorphic identity function as
  defined in the respective system.
  }\label{tab:compare}
\begin{tabular}{lll}
\toprule
                           & $\lambda^*$ \cite{DBLP:journals/pacmpl/BaoWBJHR21}   & This work \\[.5ex]
\midrule
\begin{tabular}[c]{@{}l@{}}
  \textbf{Untracked}\rule{0pt}{2.6ex}\\
  Primitive/atomic values
\end{tabular}
                           & \begin{tabular}[c]{@{}l@{}}
                             $\ty[\bot]{T}$\rule{0pt}{2.6ex} \\
                             \tabcode{val x = 42 //$\;$: Int$\untrack$}
                             \end{tabular}
                           & \begin{tabular}[c]{@{}l@{}}
                             $\ty[\qbot]{T}$\rule{0pt}{2.6ex}\\
                             \tabcodepoly{val x = 42 //$\;$: Int$\track$}
                             \end{tabular} \\[2.5ex] %
\begin{tabular}[c]{@{}l@{}}
  \textbf{Reachability Assignment} \\
  Transitive closure vs. \\
  immediate reachability \\
\end{tabular}
                           & \begin{tabular}[c]{@{}l@{}}
                             Reflexive \& transitive\\
                             \tabcode{val z = x //$\;$z : T$\trackset{z,x,...}$}\\
                             \\
                             \end{tabular}
                           & \begin{tabular}[c]{@{}l@{}}
                             One-step by default, transitive \\
                             on demand (\secref{sec2:ondemand}) \\
                             \tabcodepoly{val z = x //$\;$z : T$\trackset{z}$}
                           \end{tabular}  \\[3.5ex] %
\begin{tabular}[c]{@{}l@{}}
  \textbf{Fresh and Tracked}\\
  {Tracked but unbound in the context}
\end{tabular}
                           & \begin{tabular}[c]{@{}l@{}}
                             $\ty[\qbot]{T}$\\
                             \tabcode{alloc() : T$\track$}
                             \end{tabular}
                           & \begin{tabular}[c]{@{}l@{}}
                             $\ty[\{\QFresh, ...\}]{T}$ (\secref{sec:freshness}) \\
                             \tabcodepoly{alloc() : T$\trackfresh$}
                             \end{tabular} \\[2.5ex] %
\begin{tabular}[c]{@{}l@{}}
  \textbf{Reachability Polymorphism} \\
  {Functions preserving reachability} \\
  {that depends on arguments} \\
  $~$
\end{tabular}
                           & \begin{tabular}[c]{@{}l@{}}
                             Non-parametric \& imprecise \\
                             (\secref{sec:motiv:limit}) \\
                             \tabcode{id(42) : Int$\track$} \\
                             \tabcode{id(alloc()) : Int$^\varnothing$}
                             \end{tabular}
                           & \begin{tabular}[c]{@{}l@{}}
                             Parametric \& precise \\
                             (\secref{sec:motiv:precise}) \\
                             \tabcodepoly{id(42) : Int$\track$} \\
                             \tabcodepoly{id(alloc()) : Int$^\QFresh$}
                             \end{tabular} \\[5ex] %
\begin{tabular}[c]{@{}l@{}}
  \textbf{Qualifier Subtyping}\\
  {How qualifiers can be upcast} \\
  $~$ \\
  $~$
\end{tabular}
                           & \begin{tabular}[c]{@{}l@{}}
                             Set inclusion \\
                             $T^{q_1} <: T^{q_2}$ if $q_1 \subseteq q_2$ \\
                             $~$\\
                             $~$
                             \end{tabular}
                           & \begin{tabular}[c]{@{}l@{}}
                             Context dependent (\secref{sec:motiv:maybe})\\
                             $\Gamma =$ \tabcodepoly{x: T$\track$, y: T$\trackfresh$} \\
                             $\Gamma \vdash$ \tabcodepoly{T$\trackset{x} <:\hspace{0.5em}$ T$\track$} \\
                             $\Gamma \vdash$ \tabcodepoly{T$\trackset{y} \not{<:}\hspace{0.5em}$ T$\trackfresh$}
                             \end{tabular} \\[5ex] %
\begin{tabular}[c]{@{}l@{}}
  \textbf{``Maybe'' Tracked}\\
  {Variable-dependent tracking status} \\
\end{tabular}
                           &  \begin{tabular}[c]{@{}l@{}}-- \\ \ \end{tabular}
                           &  \begin{tabular}[c]{@{}l@{}}
                              $\ty[q]{T}$ if $\QFresh \notin q$ (\secref{sec:motiv:maybe}) \\
                              $\Gamma \vdash$ \tabcodepoly{T$\trackset{x} \equiv\hspace{0.5em}$ T$\track$} \\
                              \end{tabular} \\[2.5ex] %
\begin{tabular}[c]{@{}l@{}}
  \textbf{Transitive Reachability} \\
  {When transitive closure is used}
\end{tabular}
                           &  \begin{tabular}[c]{@{}l@{}}
                              Always saturated \\
                              $~$
                              \end{tabular}
                           &  \begin{tabular}[c]{@{}l@{}}
                              On-demand when \\
                              checking overlap (\secref{sec2:saturation})
                              \end{tabular} \\[2.5ex] %
\begin{tabular}[c]{@{}l@{}}
  \textbf{Qualifier-Dependent Application} \\
  {Permitted argument dependency} \\
  {in the return type} \\
\end{tabular}
                           & \begin{tabular}[c]{@{}l@{}}
                             Shallow \\
                             $(x: \ty[q]{T}) \to \ty[p]{S}$ \\
                             $x \notin \mathit{fv}(S)$
                             \end{tabular}
                           & \begin{tabular}[c]{@{}l@{}}
                             Deep (\secref{sec:motiv:depapp}) \\
                             $(x: \ty[q]{T}) \to \ty[p]{S}$ \\
                             $x \in \mathit{fv}(S)$ if $\vardiamondsuit \notin q$
                             \end{tabular} \\[3.5ex] %
\begin{tabular}[c]{@{}l@{}}
  \textbf{Type Abstraction} \\
  {Quantification over types}\\
\end{tabular}
                            & \begin{tabular}[c]{@{}l@{}}--\\ \ \end{tabular}
                            & \begin{tabular}[c]{@{}l@{}}
                              Bounded abstraction \`a la \Fsub \\
                              $\forall X <: T. \ty[p]{S}$ (\secref{sec2:typeabs})
                              \end{tabular}\\[2ex] %
\begin{tabular}[c]{@{}l@{}}
  \textbf{Reachability Abstraction} \\
  {Quantification over reachability}\\
\end{tabular}
                            & \begin{tabular}[c]{@{}l@{}}--\\ \ \end{tabular}
                            & \begin{tabular}[c]{@{}l@{}}
                              Bounded abstraction \`a la \Fsub\\
                              $\forall \ty[x]{X} <: \ty[q]{T}. \ty[p]{S}$ (\secref{sec2:qualabs})
                              \end{tabular} \\[2ex] %
\begin{tabular}[c]{@{}l@{}}
  \textbf{Mutable References}\\
  {Values stored in references}\\
\end{tabular}
                           & \begin{tabular}[c]{@{}l@{}}
                              Only flat \& untracked \\
                             $\textsf{Ref}[\ty[\bot]{T}]$
                             \end{tabular}
                           & \begin{tabular}[c]{@{}l@{}}
                              Possibly nested \& tracked \\
                             $\textsf{Ref}[\ty[q]{T}]$
                             (\secref{sec:nested-ref})
                             \end{tabular} \\
\bottomrule
\end{tabular}
\end{table}

We start by reviewing the reachability type system $\lambda^*$
\cite{DBLP:journals/pacmpl/BaoWBJHR21}, its limitations with regards to
reachability polymorphism, and then discuss our solution to this problem.
Our work shares a large proportion of the surface-language syntax with
\citet{DBLP:journals/pacmpl/BaoWBJHR21},  but differs in typing and semantics.
In examples, we use {\color{magenta} magenta} for
\citeauthor{DBLP:journals/pacmpl/BaoWBJHR21}'s type system and
{\color{dark-cyan} blue} for ours.

\Cref{tab:compare} summarizes the key
differences between the two systems and highlights the main improvements made in
this paper. First-time readers may safely skip this table.

\subsection{Revisiting $\lambda^*$ and Its Limitations} \label{sec:revisit}

\subsubsection{Reachability Sets as Qualifiers}
The $\lambda^*$ type system annotates types with reachability sets as
qualifiers, tracking the variables in the current environment that may be
reached by following memory references from the result of an expression.  For
example, consider an \baoterm{alloc()} function that yields a new resource of
fixed type \baotype{T}{} (\eg, a file handle). The qualifier of the result is
the empty set \baoterm{alloc():} \baotype{T}{\track}, since as a \emph{fresh}
value, it cannot reach any variables in the current environment.  When bound to
a variable \baoterm{x}, an invocation of \baoterm{alloc()} is not considered
fresh anymore as \baoterm{x} reaches \baoterm{x} itself:
\begin{lstlisting}[language=OOPSLA21]
  val x = alloc()  // : T$\trackset{x}$
\end{lstlisting}
Similarly, assigning a variable to another variable propagates reachability
by growing the set:
\begin{lstlisting}[language=OOPSLA21]
  val y = x        // : T$\trackset{x,y}$
\end{lstlisting}
Reachability is not symmetric, and stronger than aliasing, \eg, @y@ reaches
@x@, but @x@ does not reach @y@. It is cheaper to compute than full aliasing
and yet sufficient to check separation (\Cref{sec2:saturation}).

\subsubsection{Non-tracking Qualifier}
The $\lambda^*$ system assigns the bottom qualifier \bao{\(\bot\)} (often omitted) to untracked values. These usually
include base types, \eg, \lstinline[language=OOPSLA21]{42: Int$\untrack$}.
Untracked values can be treated as tracked by subtyping, but not vice versa.

\subsubsection{Function Types and Observable Separation}\label{sec:motiv:funtype}
The reachability qualifier of a function tracks its free variables and transitively their
implied reachability sets. For instance:
\begin{lstlisting}[language=OOPSLA21]
  val c = ...                  // : Ref[Int]$\trackset{c,x,y}$
  def f() = !c                 // : (f() => Int)$\trackset{f,c,x,y}$ $\leftarrow$ captures c and its qualifier
\end{lstlisting}
A function type has a self-reference (\eg, @f@ above, often omitted), which can be
used as an upper bound of its captured reachability set, \ie, it holds that \lstinline[language=OOPSLA21]|{c,x,y} <: {f}|,
but only inside the function body.
This is a mechanism for typing escaping closures (cf.~\Cref{sec:subtyping}), \eg,
when a function capturing @c@ escapes its defining scope, we can abstract
over the now free variable as follows:
\begin{lstlisting}[language=OOPSLA21]
  { val c = ...; { () => c } } // : (f() => Ref[Int]$\trackset{f}$)$\track$
\end{lstlisting}
Function return types also track reachability and
may mention other variables in the context, indicating possible aliases.
Argument qualifiers indicate the permissible overlap between a call-site
argument and the function's reachable set. Consider the following identity function:
\begin{lstlisting}[language=OOPSLA21]
  def id(x: T$\track$): T$\trackset{x}$ = x         $\,\,\,$// : ((x: T$\track$) => T$\trackset{x}$)$\track$
\end{lstlisting}
Its argument qualifier is the empty set, demanding that the argument cannot
be aliased with the free variables of the function.
This is the \emph{observable separation guarantee} of reachability types.

\subsubsection{Reachability Polymorphism and its Limitations} \label{sec:motiv:limit}

$\lambda^*$ provides a \emph{lightweight} form of reachability
polymorphism via dependent function applications, \eg,
consider the @id@ function from above:
\begin{lstlisting}[language=OOPSLA21]
  val x: T$\trackset{x,a,b}$ = ...; id(x)        // : T$\color{magenta}^{\tt{\{x\}[x \mapsto \{x, a, b\}]}}$ = T$\tracksetl{x, a, b}$
  val y: T$\tracksetl{y, z}\ \ \,\,$ = ...; id(y)        // : T$\color{magenta}^{\tt{\{x\}[x \mapsto \{y, z\}]}}\ $  = T$\tracksetl{y, z}$
\end{lstlisting}
The type of @id@ mentions no explicit quantifiers, and yet can be regarded as
polymorphic over a fixed base type @T@ with any reachability qualifier \(q\),
as long as \(q\) is disjoint from @id@'s reachability set. Since @id@ itself
has an empty qualifier, any \(q\) is acceptable.

\paragraph{The Root of the Problem: Confusing Untracked with Fresh Values}
The problem with reachability polymorphism in \(\lambda^*\) is its non-parametric
treatment of untracked versus tracked arguments, \eg,
the \baoterm{id} function conflates these two different instantiations:
\begin{lstlisting}[language=OOPSLA21]
  val z = ...                  // : T$\untrackl$
  id(z)                        // : T$\color{magenta}^{\tt{\{x\}[x \mapsto \bot]}}$ = T$\track$ $\leftarrow$ untracked value now considered tracked
  id(alloc())                  // : T$\color{magenta}^{\tt{\{x\}[x \mapsto \varnothing]}}$ = T$\track$
\end{lstlisting}
Qualifier substitution with untracked status yields
$\color{magenta}{\tt{\{x\}[x \mapsto \bot]} = \varnothing}$
a tracked qualifier without known aliases (\ie, fresh).
\citeauthor{DBLP:journals/pacmpl/BaoWBJHR21} (Section 3.4)
made this design choice to ensure soundness,
but it introduces imprecision in tracking status and constitutes a severe
limitation in expressiveness.  No code path can be generic with respect to the
tracking status of arguments!
To see why admitting a more precise qualifier \baotype{T}{\untrack} for @id(z)@
is unsound, we can postulate this ``more precise'' behavior (\ie, assuming
$\color{magenta}{\tt{\{x\}[x \mapsto \bot]} = \bot}$) and subvert the
type system.
Consider the function  @fakeid@ returning a fresh tracked value each time:
\begin{lstlisting}[language=OOPSLA21]
  def fakeid(x: T$\track$): T$\trackset{x}$ = alloc()
\end{lstlisting}
This function typechecks since the body expression has type
\baotype{T}{\track}, which is a subtype of the declared return type
\baotype{T}{\trackset{x}}.
Under the postulate, applying @fakeid@ with a non-tracking arguments results in
\begin{lstlisting}[language=OOPSLA21]
  val y = ...                  // : T$\untrackl$
  fakeid(y)                    // : T$\color{magenta}^{\tt{\{x\}[x \mapsto \bot]}}$ = T$\untrackl$ $\leftarrow$ $\color{red}\texttt{unsound!}$
\end{lstlisting}
But @fakeid(y)@ actually returns a fresh value of qualifier $\color{magenta}\varnothing$
that should never be down-cast to untracked!
This violates the \emph{separation guarantee} of the type system: a tracked
value cannot escape as an untracked value. Otherwise, it can no longer be kept
separate from other tracked values.

To summarize, reachability polymorphism via dependent application in $\lambda^*$ must
sacrifice parametricity and  precision for soundness, leading to a confusion of untracked with fresh values.
There is no easy fix with the binary track/untrack distinction, and we must rethink reachability polymorphism
and the notion of freshness.

\subsection{Precise Reachability Polymorphism in \maybelang}\label{sec:motiv-reach-poly}

We propose the $\maybelang$-calculus, which features a new treatment of
freshness and a finer-grained reachability assignment, leading to a well-behaved
and more precise notion of reachability polymorphism that smoothly scales to type-and-qualifier
abstraction.

\subsubsection{One-Step Reachability Tracking}\label{sec2:ondemand}

\citet{DBLP:journals/pacmpl/BaoWBJHR21} use an \emph{``eager''} strategy to track aliases:
typing relations assign \emph{saturated} qualifiers, \ie, these qualifiers are large enough
to include all transitively reachable variables.  In contrast, $\maybelang$
keeps reachability sets minimal in type assignment and only computes transitive
closures \emph{on demand} (cf.~\Cref{sec2:saturation}), which ensures that we can preserve
chains of reachability and refine elements in the chain later by substitution or subtyping.

The {\color{magenta} ``eager''} and {\color{dark-cyan}``on-demand''} tracking strategies
each treat variable bindings differently (typing context shown to the right of $\dashv$):

\begin{minipage}{0.49\linewidth}
\begin{lstlisting}[language=OOPSLA21]
val x = alloc()  // : T$\trackset{x}$
val y = x        // : T$\trackset{x, y}$
\end{lstlisting}
\end{minipage}%
\begin{minipage}{0.49\linewidth}
\begin{lstlisting}
val x = alloc()  // : T$\trackset{x}$ $\dashv$ x: T$\trackset{\qfresh}$
val y = x        // : T$\trackset{y}$ $\dashv$ y: T$\trackset{x}$, x: T$\trackset{\qfresh}$
\end{lstlisting}
\end{minipage}\\
In the eager version (left),  @y@ reaches {\color{magenta}@{x, y}@},
transitively including @x@'s reachability set from the context.  The on-demand
version (right) only assigns the one-step reachability set
{\color{dark-cyan}@{y}@}. It can be scaled to the saturated set by subtyping
({\color{dark-cyan}@{y} <: {x,y}@}) which includes the subset relation.
On-demand tracking preserves the chains of reachability in typing: during
reduction steps, qualifiers in the chain can be replaced with smaller reachable
sets, leading to an increase in precision via substitution.

\subsubsection{Freshness Marker $\QFresh$}\label{sec:freshness}
We model potential freshness by adding a marker \(\QFresh\) to qualifiers,
connecting static observability with evaluation.  Consistent with observable
separation (\Cref{sec:motiv:funtype}), a type $\ty[\{\qfresh\}]{T}$ describes
expressions which cannot reach the currently observable variables, but they may
reach unobservable variables, including new references.  The prime example is
the reduction of allocations:
\begin{lstlisting}
  alloc() // : Ref[Int]$\trackset{\qfresh}$ $\quad\color{black}\longrightarrow\quad$ $\color{black}\ell$   // : Ref[Int]$^{\text{\texttt{\{\(\ell\)\}}}}$, where $\ell$ is a fresh location value
\end{lstlisting}
Before reduction, @alloc()@ is fresh, \ie, it must be tracked but is not bound
to a variable.  Afterwards, we have a new and definitely known store location,
which is considered not fresh, thus \(\qfresh\) vanishes.  The presence of
\(\qfresh\) indicates that reduction steps may grow the qualifier, and its
absence indicates that they will not.
\citeauthor{DBLP:journals/pacmpl/BaoWBJHR21}'s track/untrack system assumes
that any tracked qualifier might grow.

The \(\qfresh\) marker also serves as a ``contextual freshness'' indicator for
function parameters, \eg, here is the reachability-polymorphic identity
function in \maybelang:
\begin{lstlisting}
  def id(x: T$\trackfresh$): T$\trackset{x}$ = x                            // : ((x: T$\trackfresh$) => T$\trackset{x}$)$\track$
\end{lstlisting}
The type specifies that @id@ (1) cannot observe anything about its context
({\color{dark-cyan}\(\varnothing\)}), and (2) it accepts arguments that may
reach any unobservable variables. Thus, the @id@ function  accepts @T@
arguments with any qualifier and the function body can only observe a fresh
argument.  Adjusting parameter qualifiers permits  controlling the overlap
between functions and their arguments, \eg, consider variants of @id@ which
close over some variable @z@ in context:
\begin{lstlisting}
  def id2(x: T$\trackset{\qfresh}\,\,\,\,\,$): T$\trackset{x}$ = { val u = z; x }            // : ((x: T$\trackset{\qfresh}\,\,\,\,\,$) => T$\trackset{x}$)$\trackset{z}$
  def id3(x: T$\trackset{\qfresh,z}$): T$\trackset{x}$ = { val u = z; x }            // : ((x: T$\trackset{\qfresh,z}$) => T$\trackset{x}$)$\trackset{z}$
  def id4(x: T$\trackset{z}\,\,\,\,\,$): T$\trackset{x}$ = { val u = z; x }            // : ((x: T$\trackset{z}\,\,\,\,\,$) => T$\trackset{x}$)$\trackset{z}$
\end{lstlisting}
The qualifiers on the function type and the parameter specify the reachability
information that the implementation can \emph{observe} about its context (only
@z@ here), and about any given argument, respectively. It also says that the
implementation is \emph{oblivious} to anything it \emph{cannot observe}.
Function @id2@ accepts arguments reaching anything that does not (directly or
transitively) reach @z@.  But @id3@ permits @z@ in the argument's qualifier,
effectively allowing any argument. Finally, @id4@'s parameter lacks the
freshness marker, constraining arguments to be contextually non-fresh. That is,
only observable arguments which reach at most @z@ are allowed.

With the freshness marker, it is no longer necessary to use
{\color{magenta}$\bot$} to indicate untracked values.
In \maybelang, qualifiers of untracked values (\eg, primitive values) are simply
denoted by the empty set {\color{dark-cyan}$\varnothing$}.

\subsubsection{Precise Reachability Polymorphism}\label{sec:motiv:precise}

Unlike its \(\lambda^*\) counterpart, @id@ is truly reachability polymorphic,
because it properly preserves the tracking status of arguments:
\begin{lstlisting}
  id(42)                       // : Int$\color{dark-cyan}^{\tt{\{x\}[x \mapsto \varnothing]}}$ = Int$\trackl\,$   $\leftarrow$ unbound and untracked
  id(alloc())                  // : T$\color{dark-cyan}^{\tt{\{x\}[x \mapsto \QFresh]}}$   = T$^{\{\QFresh\}}\ \ \ \,$     $\leftarrow$ unbound and tracked (fresh)
\end{lstlisting}
The key design difference here is having the \(\qfresh\) marker in qualifiers
to explicitly communicate (non-)freshness which is preserved by dependent
application and substitution.
Consider a function that mutates a captured reference cell and returns the
argument.  We annotate that the argument @x@ is potentially aliased with the
captured argument @c1@.  However, in \citet{DBLP:journals/pacmpl/BaoWBJHR21}'s
system, this potential alias is propagated to the return type qualifier and we
cannot get rid of it even when applying with a non-overlapped argument @c2@:
\begin{lstlisting}[language=OOPSLA21]
  ... // c1: T$\trackset{c1}$, c2: T$\trackset{c1}$
  def foo(x: T$\trackset{c1}$): T$\trackset{c1,x}$ = { c1 := !c1 + 1; x } // : ((x: T$\trackset{c1}$) => T$\trackset{c1,x}$)$\trackset{c1}$
  foo(c1)                                         // : T$\trackset{c1}$
  foo(c2)                                         // : T$\trackset{c1, c2}$  $\leftarrow$ imprecise!
\end{lstlisting}
In contrast, $\maybelang$ would not propagate such imprecision by tracking
one-step reachability. The return type only tracks the argument @x@.
When applying different arguments to the function, precise reachability
is retained:
\begin{lstlisting}
  ... // c1: T$\trackset{c1}$, c2: T$\trackset{c1}$
  def foo(x: T$\trackset{c1,\qfresh}$): T$\trackset{x}$ = { c1 := !c1 + 1; x }     // : ((x: T$\trackset{c1,\qfresh}$) => T$\trackset{x}$)$\trackset{c1}$
  foo(c1)                                        // : T$\trackset{c1}$
  foo(c2)                                        // : T$\trackset{c2}$  $\leftarrow$ precision retained
\end{lstlisting}

The freshness marker also prevents typing the problematic @fakeid@ function, since $\{\QFresh\}$
is not compatible with the result qualifier {\color{dark-cyan}@{x}@}:
\begin{lstlisting}
  def fakeid(x: T$\trackfresh$): T$\trackset{x}$ = alloc()                  // $\color{red}\texttt{type error}$: $\text{\texttt{\{\qfresh\}}}\not{<:}\text{\texttt{\{x\}}}$
\end{lstlisting}

\subsubsection{Maybe-Tracked and Subtyping} \label{sec:motiv:maybe}
With the one-step reachability tracking, we introduce a novel notion of
``maybe-tracked'' status in $\maybelang$.  For example, the tracking status of
@Int$\trackset{x}$@ only depends on the reachability of @x@, and therefore is
``maybe'' tracked:
\begin{lstlisting}
  val x = 42           // : Int$\trackset{x}$            $\leftarrow$ bound but untracked
  id(x)                // : Int$\trackset{x}$ <: Int$\track$     $\leftarrow$ unbound and upcast via one-step reachability
\end{lstlisting}
Moreover, @Int$\trackset{x}$@ is equivalent to @Int$\track$@, upcast by one-step
reachability using $\maybelang$'s subtyping relation.
Chasing the typing assumptions, both {\color{dark-cyan}@{x} <: $\varnothing$@}
and {\color{dark-cyan}@$\varnothing$ <: {x}@} hold in the above context, which
justifies the equivalence.
This reasoning step uses a subtyping rule for looking up qualifiers of bound
variables in the context (see \Cref{sec:subtyping}), which permits smaller,
context-dependent steps to form reachability chains \emph{as long as qualifiers
in the chain are all non-fresh}. Therefore, @id(y)@ cannot be upcast since
its one-step reachable variable @y@ is fresh:
\begin{lstlisting}
  val y = alloc()      // : T$\trackset{y}$              $\leftarrow$ bound and tracked
  id(y)                // : T$\trackset{y}$              $\leftarrow$ bound and cannot further upcast since y fresh
\end{lstlisting}

\subsubsection{On-Demand Transitivity}\label{sec2:saturation}

When does the type system actually need  to compute saturated qualifiers
with the ``on-demand'' tracking strategy (\Cref{sec2:ondemand})?
Applying functions that expect fresh arguments is the only situation where this is necessary.
For example, consider a function @f@ that does not permit overlap
between the argument's qualifier and its own reachable set:
\begin{lstlisting}
  val c1 = alloc()                   // : Ref[Int]$\trackset{c1}$ $\dashv$ c1: Ref[Int]$\trackfresh$
  def f(x: Ref[Int]$\trackfresh$) = !c1 + !x      // : (f(x: Ref[Int]$\trackfresh$) => Int)$\trackset{c1}$
  val c2 = c1                        // : Ref[Int]$\trackset{c2}$
  f(c2)                              // $\color{red}\texttt{type error}$: since {c1,c2} $\cap$ {c1} $\neq \varnothing$
\end{lstlisting}
The application @f(c2)@ should  be rejected due to the lack of separation
between @c2@ and @f@. Since the one-step reachability strategy lets variable bindings
reach only themselves by default, naively intersecting the function and argument at the call site
would not detect that @c2@ overlaps with @f@ through @c1@. Thus,
a sound overlap check at call sites must first compute saturated upper bounds on demand,
and then compute their intersection.
We discuss the formal details of  saturated qualifiers and overlap checking further in \Cref{sec:laziness}.

Finally, it is worth noting that @c2@'s qualifier cannot be upcast through its
reachability chain @c1@ to \(\{\QFresh\}\) via subtyping,
which would result in unsound overlap checking (cf.~\Cref{sec:subtyping}).

\subsubsection{Qualifier-Dependent Application} \label{sec:motiv:depapp}

Recall that the $\lambda^*$-calculus achieves reachability polymorphism
via dependent function application (\Cref{sec:motiv:limit}). That is, given a function type
\(f(x: \ty[q_1]{T_1}) \to \ty[q_2]{T_2}\), both \(x\) and \(f\) may occur in the
codomain qualifier \(q_2\), but the system forbids occurrences within \(T_2\)
to ensure a sound treatment of escaping closures~\cite{DBLP:journals/pacmpl/BaoWBJHR21}.
Therefore, only shallow dependencies are allowed in applications.

The root cause is that all tracked qualifiers in $\lambda^*$ can potentially
grow with unobservable reachability sets. Due to \maybelang{}'s refined
freshness-marker model, we can distinguish fresh/growing from non-fresh/static
qualifiers, and safely permit occurrence of \(f\) and \(x\) deeply in \(T_2\)
in the latter case (cf. \Cref{sec:depend-appl}) without precision loss.
Consider the following function returning another function:
\begin{lstlisting}
  val c = alloc()
  def f(x: Ref[Int]$\trackset{c}$) = () => x // : (f(x: Ref[Int]$\trackset{c}$) => (Unit => Ref[Int]$\trackset{x}$)$\trackset{x}$)$\track$
  f(c)                           $\,$// : (Unit => Ref[Int]$\trackset{c}$)$\trackset{c}$
\end{lstlisting}
We can assign the reachability set of @f@'s innermost return type, depending
on the outer argument @x@.
The dependent application @f(c)@ yields a precise type,
whereas the $\lambda^*$-calculus would have to
upcast the returned function type to a self-reference before application (thus introducing imprecision).

\subsection{Type-and-Qualifier Abstractions in \polylang}
\label{sec2:qualabs}

Because of its confounding of fresh tracked values and untracked values, the
$\lambda^*$-calculus lacks type abstraction mechanisms such as generic types.
In contrast, we can smoothly extend \maybelang with type-and-qualifier abstractions
in the style of $F_{<:}$ \cite{DBLP:journals/iandc/CardelliMMS94}.

\subsubsection*{Type Abstractions} \label{sec2:typeabs}

The first step towards \polylang is to add \Fsub-style quantification
over proper types \emph{without} qualifiers.
This is already attractive and enough to express the identity function with
\emph{both} type and lightweight reachability polymorphism.
The following definition of @id@ adds the type parameter @T@ and does not require
\Fsub-style abstraction of qualifiers:
\begin{lstlisting}
  def id[T $<:$ Top](x: T$\trackfresh$): T$\trackset{x}$ = x
\end{lstlisting}
As in \Fsub, we add an upper bound @Top@ of all types to the
system. However, reachability sets attached to proper types must be
concrete and cannot be abstracted over.

\subsubsection*{Qualifier Abstractions}

We now introduce an \emph{abstract qualifier} and an \emph{upper bound
qualifier} in the style of \Fsub.
In this way, the polymorphic identity function is a shorthand notation that
does not need to use the abstract qualifier. The fully desugared term is
\begin{lstlisting}
  def id[T$\trackvar{z}$ <: Top$\trackfresh$](x: T$\trackfresh$): T$\trackset{x}$ = x
  def id[T](x: T$\trackfresh$) = x // shorthand notation
\end{lstlisting}
where @z@ is the abstract qualifier variable bounded by $\QFresh$.
One could further omit the abstract qualifier, type-and-qualifier bound,
and return type using the shorthand notation shown above.

Although the additionally introduced abstract qualifier
({\color{dark-cyan}@z@}) does not yield further expressiveness for the identity
function, quantified qualifiers vary independently of the type variable, and
one is free to attach them to any proper type.
In \Cref{sec:poly}, we present the formalization of \polylang, which combines
\maybelang with  \Fsub-style polymorphism for bounded type-and-qualifier
abstraction.

\subsection{Polymorphic Data Types} \label{sec2:polydata}
In this section, we consider typing polymorphic data types under \polylang and
demonstrate the expressiveness gain from type-and-qualifier polymorphism.
Suppose we have extended the language with native pair types, how should their
typing rules look like? There are two main design goals:
\begin{itemize}[leftmargin=2em]
  \item First, we would like to precisely track the reachability of components,
    so a pair type @Pair[A$\trackvar{a}$, B$\trackvar{b}$]@ annotates qualifiers to components.
    Moreover, the projection functions should preserve precise reachability
    whenever possible.
    For example, given an expression of type @Pair[A$\trackvar{a}$, B$\trackvar{b}$]@,
    retrieving its components should yield exactly the same qualifiers we put in:
\begin{lstlisting}
...                // u: Ref[Int]$\trackvar{u}$, v: Ref[Int]$\trackvar{v}$
val p = Pair(u, v) // : Pair[Ref[Int]$\trackvar{u}$, Ref[Int]$\trackvar{v}$]$\trackvar{p}$
fst(p)             // : Ref[Int]$\trackvar{u}$                      $\leftarrow$ precision retained
snd(p)             // : Ref[Int]$\trackvar{v}$                      $\leftarrow$ precision retained
\end{lstlisting}
    The above snippet creates a pair of two reference cells and then gets its components.
    Explicit type applications are omitted and can be inferred as in Scala
    (\eg, by bidirectional typing \cite{DBLP:journals/toplas/PierceT00}).

  \item Second, we would like to allow pairs capturing local variables
    to escape from their defining scope (\eg, the counter example in
    \Cref{fig:counter}). To this end, we designate a self-reference @p@
    for pairs $\mu$@p.Pair[A$\trackvar{a}$, B$\trackvar{b}$]@,
    which serves as an upper bound of the pair-component reachability.
    To handle escaped pairs, the key insight is similar to function types: we
    can replace arbitrary component qualifiers (as they are in covariant
    positions) with self references via subtyping
    @Pair[A$\trackvar{a}$, B$\trackvar{b}$]$\trackvar{q}$@ $<:$ @(@$\mu$@p.Pair[A$\trackvar{p}$, B$\trackvar{p}$])$\trackvar{q}$@,
    just as with function subtyping where the codomain's qualifier can be upcast to
    function's self-reference:
    \begin{lstlisting}
def f() = {
  ...              // u: Ref[Int]$\trackvar{u}$, v: Ref[Int]$\trackvar{v}$
  Pair(u, v)       // : Pair[Ref[Int]$\trackvar{u}$, Ref[Int]$\trackvar{v}$]$\trackset{u,v}$
}                  //   upcast to $\mu$p.Pair[Ref[Int]$\trackvar{p}$, Ref[Int]$\trackvar{p}$]$\track$ when escaping
    \end{lstlisting}
    Once the pair is bound to a variable, we ``unpack'' the self-reference so
    that projections are properly aliased.
    \begin{lstlisting}
val p = f()        // now u and v are not in the context:
fst(p)             // : Ref[Int]$\trackvar{p}$
snd(p)             // : Ref[Int]$\trackvar{p}$
    \end{lstlisting}
\end{itemize}
Aiming for minimality, the rest of this section investigates the typing of
Church-encoded pairs that satisfies our desired typing and subtyping rules.
We discuss two different types of encodings: ``transparent'' and ``opaque'' pairs
corresponding to the two usage scenarios above.
Transparent pairs track precise reachability of components using \polylang's
parametric qualifiers and can only be used under appropriate contexts.  Opaque
pairs use self-references as an abstraction to hide local
qualifiers and can escape to an outer scope.  Finally, the subtyping rule
connecting both is justified by a coercion function that eta-expands pairs,
converting transparent pairs to opaque pairs.

\subsubsection{Typing Church Pairs, Transparently}\label{para:church-pair}

The transparent pair type @Pair[A,B]@ is defined as a universal type with
argument type @C@.  We also introduce abstract qualifiers for type @A@, @B@,
and @C@.  Moreover, the qualifier of result type @C@ is simply parametric.
\begin{lstlisting}
  type Pair[A$\trackvar{a}$ <: Top$\trackfresh$, B$\trackvar{b}$ <: Top$\trackset{a,\qfresh}$] =
    [C$\trackvar{c}$ <: Top$\trackset{a,b,\qfresh}$] => ((A$\trackvar{a}$, B$\trackvar{b}$) => C$\trackvar{c}$)$\track$ => C$\trackvar{c}$)$\trackset{c,a,b}$
\end{lstlisting}
We also assume the base system is extended with multi-argument
functions (instead of currying arguments), where each argument is disjoint from
others.
Similarly, the term constructor uses @C@'s qualifier for the application @f(a, b)@:
\begin{lstlisting}
  def Pair[A$\trackvar{a}$ <: Top$\trackfresh$, B$\trackvar{b}$ <: Top$\trackset{a, \qfresh}$](a: A$\trackvar{a}$, b: B$\trackvar{b}$): Pair[A, B]$\trackset{a,b,\qfresh}$ =
    [C$\trackvar{c}$ <: Top$\trackset{a,b,\qfresh}$] => (f: (A, B) => C) => f(a, b)
\end{lstlisting}
When using the quantified type for the argument or return type, its
accompanying qualifier is implicitly attached, \ie, we write
@A@ as a shorthand of @A$\trackvar{a}$@ when using it.

The projectors @fst@ and @snd@ have the usual definition but using
accurate types and qualifiers:
\begin{lstlisting}
  def fst[A$\trackvar{a}$ <: Top$\trackfresh$, B$\trackvar{b}$ <: Top$\trackset{a, \qfresh}$](p: Pair[A$\trackvar{a}$, B$\trackvar{b}$]$\trackset{a,b,\qfresh}$): A$\trackvar{a}$ = p((a, b) => a)
  def snd[A$\trackvar{a}$ <: Top$\trackfresh$, B$\trackvar{b}$ <: Top$\trackset{a, \qfresh}$](p: Pair[A$\trackvar{a}$, B$\trackvar{b}$]$\trackset{a,b,\qfresh}$): B$\trackvar{b}$ = p((a, b) => b)
\end{lstlisting}
By making the elimination type @C@'s qualifier parametric, we can now
instantiate it in the projection function with the precise component
qualifiers, as shown by the example at the beginning of this section.
The general Church-encoding of data types via sums and products can also
benefit from the increased precision.

\subsubsection{Typing Escaped Church Pairs, Opaquely}
The transparent pair typing works for cases where the components are still in the
context, but the pair cannot escape from that scope (cf. \Cref{fig:counter}).
We now discuss the types of escaped pairs using \emph{self-references} as abstraction.
To avoid confusion, we name the type and constructor of opaque pairs as @OPair@,
and transparent pairs remain @Pair@.
\begin{lstlisting}
  type $\color{black}{\mu}$p.OPair[A$\trackvar{a}$ <: Top$\trackfresh$, B$\trackvar{b}$ <: Top$\trackset{a,\qfresh}$] = // p: self-reference of a pair instance
    p[C$\trackvar{c}$ <: Top$\track$] => (h((x: A$\trackfresh$, y: B$\trackset{x,\qfresh}$) => C$\trackset{x,y}$) => C$\trackvar{h}$)$\trackvar{p}$
\end{lstlisting}
Note that in $\polylang$ universal types and type abstractions also have
self-references (\eg, @p@ in the definition) that can be used
to express escaping polymorphic closures, similar to their
term-level correspondences (\eg, @h@ in the definition).
Therefore, the self-reference in @$\color{black}{\mu}$p.OPair@ is just
an syntactic annotation referring to the self-reference of the universal type.
Compare to the transparent typing, here we do not use quantified qualifiers that are
parametrically introduced.
Instead, we use a chain of self-references
in the codomains, upcasting from the inner most reachability @{x, y}@ to @h@
and to @p@.
The introduction and elimination forms of opaque pairs also reflect
the typing using self-references:
\begin{lstlisting}
  def OPair[A$\trackvar{a}$ <: Top$\trackfresh$, B$\trackvar{b}$ <: Top$\trackset{a,\qfresh}$](a: A, b: B): $\color{black}{\mu}$p.OPair[A, B]$\trackset{a,b}$ =
    [C$\trackvar{c}$ <: Top$\track$] => (f: (x: A$\trackfresh$, y: B$\trackset{x,\qfresh}$) => C$\trackset{x,y}$) => f(a, b)
  def fst[A$\trackvar{a}$ <: Top$\trackfresh$, B$\trackvar{b}$ <: Top$\trackset{a,\qfresh}$](p: $\color{black}{\mu}$p.OPair[A, B]$\trackset{a,b}$): A$\trackvar{p}$ = p((a, b) => a)
  def snd[A$\trackvar{a}$ <: Top$\trackfresh$, B$\trackvar{b}$ <: Top$\trackset{a,\qfresh}$](p: $\color{black}{\mu}$p.OPair[A, B]$\trackset{a,b}$): B$\trackvar{p}$ = p((a, b) => b)
\end{lstlisting}

\subsubsection*{Imprecise Eliminations}
While the typing works out, the resulting qualifiers of the projections
@fst@/@snd@ are imprecise.
We have no means to vary the qualifier of the elimination type @C@ in type
@OPair@.
When the component qualifiers are not available in the context,
using the self-reference to track possible sharing is the most accurate
option.
This is the intended design as shown in the beginning of this section.
A side effect of such typing is that in-scope elimination can yield the set of
joint qualifiers, since the pair can reach them by our ``maybe-tracked''
notation:
\begin{lstlisting}
  ...                 // u and v defined as before
  val p = OPair(u, v) // : $\mu$p.OPair[Ref[Int], Ref[Int]]$\track$ binds to p, unpacking the self-ref
  fst(p)              // : Ref[Int]$\trackvar{p}$ <: Ref[Int]$\trackset{u,v}$  $\leftarrow$ imprecise joint qualifiers
  snd(p)              // : Ref[Int]$\trackvar{p}$ <: Ref[Int]$\trackset{u,v}$  $\leftarrow$ imprecise joint qualifiers
\end{lstlisting}

\subsubsection*{Conversion between Opaque and Transparent Pairs}
The two different types for Church-encoded pairs are connected, \ie,
transparent pairs can be converted to opaque via eta-conversion:
\begin{lstlisting}
  def conv[A$\trackvar{a}$ <: Top$\trackfresh$, B$\trackvar{b}$ <: Top$\trackset{a,\qfresh}$](p: Pair[A, B]$\trackset{a,b,\qfresh}$): $\color{black}{\mu}$p.OPair[A, B]$\trackset{a,b}$ =
    OPair(fst(p), snd(p))
\end{lstlisting}
From a pragmatic perspective, when the language is extended with pairs as native
algebraic data types, the eta-conversion justifies an admissible subtyping
rule for escaped pairs.

It is important to note that both the transparent and opaque pair encodings
use the same \emph{terms}, namely the standard System-F encoding,
just with different assigned \emph{type qualifiers}.
We expect that the core \polylang typing and subtyping rules can be refined
or extended to enable a uniform encoding so that the eta-conversion step
becomes unnecessary.

\subsection{Nested Mutable References} \label{sec:nested-ref}

The base type system of \citeauthor{DBLP:journals/pacmpl/BaoWBJHR21} supports
reference cells that can only store ``untracked'' values or pure computation
(\ie, of qualifier {\color{magenta}$\bot$}).
This compromise is again due to conflating untracked and fresh values in
\(\lambda^*\).
To support more expressive nested references, the $\lambda^*$-calculus
has to use a flow-sensitive effect system with explicit move semantics.
This is not at all required in the  \(\maybelang\)-calculus, because its
freshness model already  supports a form of nested references.

The key idea is that a reference's content also carries a reachability annotation, \eg,
@Ref[T$\, \trackvar{p}$]$\trackvar{q}$@, where \(\qfresh\notin\text{{\color{dark-cyan}\lstinline{p}}}\).
That is, only references with fully observable reachability  are permitted, and
these references remain invariant once introduced, and can only be assigned
with values having the same reachability set.

This restrictive model for \maybelang can already express more interesting programs than \(\lambda^*\).
For example, we could define complex heap-allocated data
structures or store effectful functions into reference cells.
Recall the @counter@ example (\Cref{fig:counter}) that returns two functions to
increase or decrease an encapsulated state\footnote{
  Because of the subtyping discussed in \Cref{sec2:polydata}, we do not
  distinguish transparent and opaque pairs and assume the cast is implicitly
  applied when necessary.
}. Both functions share the
same reachable set containing @ctr@:
\begin{lstlisting}
  val ctr = counter(0)   // : Pair[(()=>Unit)$\tracksetl{ctr}$,(()=>Unit)$\tracksetl{ctr}$]$\tracksetl{ctr}$
  val incr = fst(ctr)    // : (()=>Unit)$\tracksetl{ctr}$
  val decr = snd(ctr)    // : (()=>Unit)$\tracksetl{ctr}$
\end{lstlisting}
Note that @incr@ and @decr@ encapsulate and mutate a locally-defined
heap reference cell, thus are effectful.
We could create a reference cell that stores either the @incr@ or @decr@ function:
\begin{lstlisting}
  val cf = new Ref(incr) // : Ref[(()=>Unit)$\tracksetl{ctr}$]$\trackset{cf}$
  cf := decr             // : Unit$\track$
\end{lstlisting}

This pattern permits more flexible uses of these capabilities, \eg,
registering functions as callbacks or tracking permissible escaping
via assignments.
\Cref{sec:mutable-refs} discusses the formal rules of this restricted form of
nested references.

\section{Simply-Typed Reachability Polymorphism}\label{sec:maybe}

This section presents the formal metatheory of the base \maybelang-calculus
(\Cref{sec:motiv-reach-poly}), a generalization of the \(\lambda^*\)-calculus
by \citet{DBLP:journals/pacmpl/BaoWBJHR21} that adds the notion of freshness markers for a more
precise notion of lightweight qualifier polymorphism.

\begin{figure}[t]\small
\begin{mdframed}
\judgement{Syntax}{\BOX{\maybelang}}
  $$\begin{array}{l@{\qquad}l@{\qquad}ll}
    x,y,z   &\in & \Var                                                                            & \text{Variables} \\
    f,g,h   &\in & \Var                                                                            & \text{Function Variables} \\
    t       &::= & c \mid x \mid \lambda f(x).t \mid t~t \mid \tref~t \mid\ !~t \mid t \coloneqq t & \text{Terms}\\[2ex]
    p,q,r   &\in & \mathcal{P}_{\mathsf{fin}}(\Var \uplus \{ \QFresh \})                           & \text{Reachability Qualifiers} \\
    S,T,U,V &::= & B \mid f(x: \ty{Q}) \to \ty{Q} \mid \TRef~Q                                     & \text{Types} \\
    O,P,Q,R &::= & \ty[q]{T}                                                                       & \text{Qualified Types} \\[2ex]
    \flt    &\in & \mathcal{P}_{\mathsf{fin}}(\Var)                                                & \text{Observations} \\
    \Gamma  &::= & \varnothing\mid \Gamma, x : Q                                                   & \text{Typing Environments} \\
    \end{array}$$\\
\textbf{\textsf{Qualifier Notations}} \qquad
$\small p,q := p \qlub q\qquad x := \{x\} \qquad \QFresh :=\{\QFresh\}\qquad \starred{q} := \{\qfresh\}\qlub q $
\caption{The syntax of \maybelang.}
\label{fig:maybe:syntax}
\end{mdframed}
\vspace{-1em}
\end{figure}

\begin{figure}[t]\small
\begin{mdframed}
\judgement{Term Typing}{\BOX{\strut\G[\flt] \ts t : \ty{Q}}}\\[1ex]
\begin{minipage}[t]{.47\linewidth}\vspace{0pt}
    \infrule[t-var]{
      x : \ty[q]{T} \in \G\quad\quad x \in \flt
    }{
      \G[\flt] \ts x : \ty[x]{T}
    }
\vgap
    \infrule[t-abs]{
      \cx[q,x,f]{(\G\ ,\ f: \ty{F}\ ,\ x: \ty{P})} \ts t : \ty{Q}\quad q\subq \flt\\
      \ty{F} = \ty[q]{\left(f(x: \ty{P}) \to \ty{Q}\right)}
    }{
      \G[\flt] \ts \lambda f(x).t : \ty{F}
    }
\vgap
    \infrule[t-app]{
      \G[\flt]\ts t_1 : \ty[q]{\left(f(x: \ty[p]{T}) \to \ty{Q}\right)} \qquad
      \G[\flt]\ts t_2 : \ty[p]{T}\\\QFresh\notin p\qquad Q = \ty[r]{U}\qquad r\subq\starred{\varphi,x,f}
    }{
      \G[\flt]\ts t_1~t_2 : \ty{Q}[p/x, q/f]
    }
\vgap
    \infrule[t-app$\QFresh$]{
      \G[\flt]\ts t_1 : \ty[q]{\left(f(x: \ty[p\,{\overlap}\, q]{T}) \to Q\right)}\quad
      \G[\flt]\ts t_2 : \ty[p]{T}\\ Q = \ty[r]{U}\qquad r\subq\starred{\varphi,x,f}\\
      \QFresh \in p \Rightarrow x\notin\FV(U)\quad \QFresh \in q \Rightarrow f\notin\FV(U)
    }{
      \G[\flt]\ts t_1~t_2 : Q[p/x, q/f]
    }
\end{minipage}%
\begin{minipage}[t]{.03\linewidth}
\hspace{1pt}%
\end{minipage}%
\begin{minipage}[t]{.5\linewidth}\vspace{0pt}
    \infrule[t-cst]{
      c \in B
    }{
      \G[\flt] \ts c : \ty[\qbot]{B}
    }
\vgap
  \infrule[t-ref]{
      \G[\flt]\ts t : \ty[q]{T}\qquad \QFresh\notin q
    }{
      \G[\flt]\ts \tref~t : \ty[\starred{q}]{(\TRef~\ty[q]{T})}
    }
\vgap
    \infrule[t-deref]{
      \G[\flt]\ts t : \ty[q]{(\TRef~\ty[p]{T})}\qquad \QFresh\notin p\qquad p\subq\flt
    }{
      \G[\flt]\ts !t : \ty[p]{T}
    }
\vgap
    \infrule[t-assgn]{
      \G[\flt]\ts t_1 : \ty[q]{(\TRef~\ty[p]{T})} \quad
      \G[\flt]\ts t_2 : \ty[p]{T} \quad
      \QFresh\notin p
    }{
      \G[\flt]\ts t_1 \coloneqq t_2 : \ty[\qbot]{\TUnit}
    }
\vgap
    \infrule[t-sub]{
      \G[\flt]\ts t : \ty{Q} \quad  \G\ts\ty{Q} <: \ty[q]{T}\quad q\subq\starred{\flt}
    }{
      \G[\flt]\ts t : \ty[q]{T}
    }
\end{minipage}
\caption{Typing rules of \maybelang. }
\label{fig:maybe:typing}
\end{mdframed}
\end{figure}

\begin{figure}[t]\small
\begin{mdframed}
\judgement{Subtyping}{\BOX{\strut\G \ts q <: q}\ \BOX{\strut \G\ts\ty{T} <: \ty{T}}\ \BOX{\strut \G\ts\ty{Q} <: \ty{Q}}}\\
\begin{minipage}[t]{.45\linewidth}\vspace{0pt}
  \typicallabel{q-trans}
  \infrule[q-sub]{p\subq q\subq \starred{\dom(\G)}}{\G\ts p <: q}
\vgap
  \infrule[q-self]{f : \ty[q]{T}\in\G\quad\quad\QFresh\notin q}{\G\ts p,q,f <: p,f}
\vgap
  \infrule[q-var]{\ \\
    x : \ty[q]{T}\in\G\quad\quad\QFresh\notin q
  }{
    \G\ts p,x <: p,q
  }
\vgap
  \infrule[q-trans]{\G\ts p <: q\quad\quad\G\ts q <: r}{\G\ts p <: r}
\end{minipage}%
\begin{minipage}[t]{.55\linewidth}\vspace{0pt}
  \infrule[s-base]{\ \\}{
    \G\ts\ty{B} <: \ty{B}
  }
\vgap
  \infrule[s-ref]{
    \G\ts\ty{S} <: \ty{T}  \quad
    \G\ts\ty{T} <: \ty{S}\quad q\subq\DOM(\Gamma)
  }{
    \G\ts\ty{\TRef~\ty[q]{S}} <: \ty{\TRef~\ty[q]{T}}
  }
\vgap
   \infrule[s-fun]{
    \G\ts\ty{P} <: \ty{O} \\
    \G\, ,\, f : \ty[\QFresh]{(f(x : O)\to Q)}\, ,\, x : \ty{P}\ts \ty{Q} <: \ty{R}
  }{
    \G\ts\ty{f(x: \ty{O}) \to \ty{Q}} <: \ty{f(x: \ty{P}) \to \ty{R}}
  }
\vgap
  \infrule[sq-sub]{
\G\ts\ty{S} <: \ty{T}\quad\quad \G\ts p <: q
  }{
    \G\ts\ty[p]{S} <: \ty[q]{T}
  }
\end{minipage}
\caption{Subtyping rules of \maybelang. }
\label{fig:maybe:subtyping}
\end{mdframed}
\end{figure}

\begin{figure}[t]
\begin{mdframed}\small
\judgement{Qualifier Substitution and Growth}{\BOX{q[p/x]}\ \BOX{q[p/\qfresh]}}
  $$
  \begin{array}{l@{\;}c@{\;}ll@{\quad\qquad\qquad\qquad}l@{\;}c@{\;}ll}
    q[p/x] & = & q\setminus\{x\}\qlub p& x\in q    & q[p/\qfresh] &=& q\qlub p & \qfresh\in q  \\
    q[p/x] & = & q                     & x\notin q & q[p/\qfresh] &=& q & \qfresh\notin q
  \end{array}$$
\judgement{Reachability and Overlap}{\BOX{{\color{gray}\G\vdash}\,x \reaches x}\ \BOX{{\color{gray}\G\vdash}\, \qsat{q}}\ \BOX{{\color{gray}\G\vdash}\,p \overlap q}}
  $$
  \begin{array}{ll@{\qquad\qquad}ll}
   \text{Reachability Relation} & {\color{gray}\G\vdash}\, x \reaches y \Leftrightarrow  x : T^{q,y} &
   \text{Variable Saturation} & {\color{gray}\G\vdash}\, \qsat{x} := \left\{\, y \mid x \reaches^* y\, \right\} \\[1.1ex]
   \text{Qualifier Saturation} & {\color{gray}\G\vdash}\, \qsat{q} :=  \bigcup_{x\in q} \qsat{x} &
  \text{Qualifier Overlap} & {\color{gray}\G\vdash}\,p \overlap q  := \starred{(\qsat{p} \qglb \qsat{q})}
  \end{array}$$
\caption{Operators on qualifiers.  We often leave the context \(\G\) implicit (marked as gray).}
\label{fig:saturation_overlap}
\end{mdframed}
\end{figure}

\begin{figure}\small
\begin{mdframed}
\judgement{Term Typing}{\BOX{\cx[\flt]{[\Gamma\mid\Sigma]} \ts t : \ty{Q}}}
$$
  \ell \in \Loc \qquad
  \Sigma ::= \varnothing \mid \Sigma,\ell : Q \qquad
  p,q,r \subq \mathcal{P}_{\mathsf{fin}}(\Var \uplus \Loc \uplus \{ \QFresh \}) \qquad
  \flt \subq \mathcal{P}_{\mathsf{fin}}(\Var \uplus \Loc)
$$
{
\infrule[t-loc]{
  \Sigma(\ell) = \ty[q]{T}\quad q\subq\DOM(\Sigma)\quad\FV(T)=\varnothing\quad\HLBox[gray!20]{\FTV(T)=\varnothing} \quad q,\ell \subq \flt
}{
  \cx[\flt]{[\G\mid\Sigma]}\ts \ell : \ty[q,\ell]{(\TRef~\ty[q]{T})}
}}

\judgement{Location Reachability, Location \& Qualifier Saturation}{\BOX{{\color{gray}\G\mid\Sigma\vdash}\,\ell \reaches \ell}\ \BOX{{\color{gray}\G\mid\Sigma\vdash}\,\qsat{\ell}}\ \BOX{{\color{gray}\G\mid\Sigma\vdash}\, \qsat{q}}}
$$
  \textstyle {\color{gray}\G\mid\Sigma\vdash}\, \ell\reaches\ell' \Leftrightarrow \Sigma(\ell)= \ty[q,\ell']{T}
  \qquad    {\color{gray}\G\mid\Sigma\vdash}\, \qsat{\ell} := \left\{\ell'\mid \ell\reaches^* \ell' \right\}
  \qquad    {\color{gray}\G\mid\Sigma\vdash}\, \qsat{q}  :=  \bigcup_{x\in q} \qsat{x} \cup \bigcup_{\ell\in q}\qsat{\ell}
$$

\judgement{Well-Formed Stores}{\BOX{\WF{\Sigma}}}
$$
  \inferrule{\ }{\WF{\varnothing}}\qquad\qquad
  \inferrule{\WF{\Sigma}\quad \FV(T)=\varnothing\quad\HLBox[gray!20]{\FTV(T)=\varnothing} \quad \varnothing\mid\Sigma\ts \qsat{q} = q\quad \ell \notin\DOM(\Sigma) }{\WF{\Sigma\, ,\, \ell : \ty[q]{T}}}
$$

\judgement{Reduction Contexts, Values, Terms, Stores}{}
$$
  \begin{array}{l@{\ \ }c@{\ \ }l@{\qquad\qquad\ }l@{\ \ }c@{\ \ }l}
    {C} & ::= & \square \mid C\ t \mid v\ C \mid \tref~C \mid\ !{C} \mid {C} := {t} \mid {v} := {C} \mid \HLBox[gray!20]{C\ [Q]} &  t      & ::= & \cdots \mid \ell\\
    {v} & ::= & \lambda f(x).t \mid {c} \mid {\ell} \mid \tunit \mid \HLBox[gray!20]{\Lambda f(\ty[x]{X}).t}                     &  \sigma & ::= & \varnothing \mid \sigma, \ell\mapsto v
  \end{array}
$$

\judgement{Reduction Rules}{\BOX{t \mid \sigma \to t \mid\sigma}}
$$
  \begin{array}{r@{\ \ }c@{\ \ }ll@{\qquad\qquad}r}
    \CX[gray]{C}{(\lambda f(x).t)\ v} \mid \sigma     & \to & \CX[gray]{C}{t[v/x, (\lambda f(x).t)/f]} \mid \sigma                  &                           & \rulename{$\beta$} \\
    \CX[gray]{C}{\tref~v} \mid \sigma                 & \to & \CX[gray]{C}{\ell} \mid (\sigma, \ell \mapsto v)                      & \ell \not\in \DOM(\sigma) & \rulename{ref} \\
    \CX[gray]{C}{!\ell} \mid \sigma                   & \to & \CX[gray]{C}{\sigma(\ell)} \mid \sigma                                & \ell  \in \DOM(\sigma)    & \rulename{deref} \\
    \CX[gray]{C}{\ell := v} \mid \sigma               & \to & \CX[gray]{C}{\tunit} \mid \sigma[\ell \mapsto v]                      & \ell \in \DOM(\sigma)     & \rulename{assign}\\
\CX[gray]{C}{\HLBox[gray!20]{(\Lambda f(\ty[x]{X}).t)\ Q}} \mid \sigma & \to &  \CX[gray]{C}{\HLBox[gray!20]{t[Q/\ty[x]{X}, (\Lambda f(\ty[x]{X}).t)/f]}} \mid \sigma &                           & \HLBox[gray!20]{\rulename{$\beta_T$}} \\
  \end{array}
$$
\caption{Extension with store typings and call-by-value reduction for \maybelang (\Cref{sec:maybe}) and \HLBox[gray!20]{\polylang} (\Cref{sec:poly}).}\label{fig:maybe:semantics}
\end{mdframed}
\end{figure}
 
\subsection{Syntax}\label{sec:maybe:syntax}

\Cref{fig:maybe:syntax} shows the syntax of \maybelang which is based on the simply-typed
\(\lambda\)-calculus
with mutable references and subtyping.  We denote general term variables by the meta variables
\(x, y, z\),
and reserve \(f,g,h\)
specifically for function self-references in contexts where the distinction matters.

Terms consist of constants of base types, variables, recursive functions \(\lambda f(x).t\)
(binding the self-reference \(f\)
and the argument \(x\)
in the body \(t\)), function applications, reference allocations, dereferences, and assignments.

Reachability qualifiers \(p,q,r\)
are finite sets of variables that may additionally include the distinct freshness marker
\(\QFresh\).
Once we add store typings (\Cref{sec:maybe:dynamic}), qualifiers will include store locations in
addition to variables. For readability, we often drop the set notation for qualifiers
and write them down as comma-separated lists of atoms.

We distinguish ordinary types \(T\)
from qualified types \(Q = \ty[q]{T}\),
where the latter annotates a qualifier $q$ to an ordinary type $T$. The types
consist of base types \(B\)
(\eg, \Type{Int}, \TUnit), references, and dependent function types \(f(x:P) \to Q\),
where both argument and return type are qualified. The codomain \(Q\)
may depend on both the self-reference
\(f\)
and argument \(x\)
in its qualifier and type. We could alternatively separate self-references from function types
using DOT-style first-class self types~\cite{DBLP:conf/oopsla/RompfA16}, \eg, \(\mu f. ((x : P) \to Q[f,x])\).

Mutable reference types \(\TRef~Q\)
track the known aliases of the value pointed to by the reference.
We also permit forms of nested references,
which are prohibited in the base \(\lambda^*\)-calculus
unless a flow-sensitive effect system is added~\cite{DBLP:journals/pacmpl/BaoWBJHR21}.

An \emph{observation} \(\flt\)
is a finite set of variables which is part of the term typing judgment (\Cref{sec:maybe:static}).
It specifies which variables in the static environment \(\Gamma\)
are observable.  The latter assigns qualified typing assumptions to variables.

\subsection{Static Semantics}\label{sec:maybe:static}

The term typing judgment \(\G[\flt]\ts t : Q\)
in \Cref{fig:maybe:typing} states that term \(t\)
has qualified type \(Q\)
and may only access the typing assumptions of \(\Gamma\)
observable by \(\flt\). For  \(Q = \ty[q]{T}\), one may think of \(t\) as a computation that
yields a result value of type \(T\) aliasing no more than \(q\), if it terminates.
Alternatively, we could formulate the typing judgment without internalizing \(\flt\),
and instead have an explicit context filter operation
\(\G[\flt] := \{x : \ty[q]{T}\in\G \mid q,x \subq \flt \}\)
for restricting the context in subterms, just like \citet{DBLP:journals/pacmpl/BaoWBJHR21} who
loosely take inspiration from substructural type systems. Internalizing \(\flt\)
(1) makes observability an explicit notion, which facilitates reasoning about
separation and overlap, and (2) greatly simplifies the Coq mechanization.
Context filtering is only needed for term typing, but not for subtyping, so as
to keep the formalization simple.

\subsubsection{One-Step Reachability}\label{sec:laziness}
Term typing usually assigns \emph{minimal} qualifiers in the currently observable context. For
instance, term variables \(x\)
track exactly themselves \rulename{t-var}, and can be used only if they are observable
(\(x\in\flt\)).
Similarly, constants of base types are untracked \rulename{t-cst}.  We can further scale up the
qualifier to include transitively reachable variables by subsumption \rulename{t-sub} if needed.
This ``one-step'' treatment of reachability is sufficient for soundness, and shows that most of the time,
we do not have to track fully transitive reachability, but instead may compute it on-demand where it
matters, \ie, when checking separation and overlap in function applications (discussed further below).
In contrast, \citet{DBLP:journals/pacmpl/BaoWBJHR21} implicitly ensures fully transitive reachability,
\ie, term typing always assigns transitively closed qualifiers.\footnote{Cf. their
  mechanization of this variant
  \url{https://github.com/TiarkRompf/reachability/tree/main/base/lambda_star_overlap}.}
Their \rulename{t-var} rule would assign \(\ty[q,x]{T}\)
where \(q\)
is transitively closed. One-step reachability simplifies the system
and adds finer-grained precision over transitive reachability, since we can refine each step in a reachability chain
as more information is discovered during evaluation. Dependent function application and
abstraction with function self-references are prime examples (\Cref{sec:depend-appl}).

\subsubsection{Functions and Lightweight Polymorphism}\label{sec:fun-lightweight-poly}
Function typing \rulename{t-abs} implements the observable separation guarantee (cf.~\Cref{sec:motiv:funtype}),
\ie, the body \(t\) can only observe what the function type's qualifier \(q\)
specifies, plus the argument \(x\) and self-reference \(f\),
and is otherwise oblivious to anything else in the environment.
We model this by setting the observation to \(q,x,f\)
when typing the body. Thus, its observation \(q\)
at least includes the free variables of the function.
To ensure well-scopedness, \(q\) must be a subset of the observation \(\flt\).
In essence, a function type \emph{implicitly} quantifies over anything that is not
observed by \(q\), achieving a lightweight form of qualifier polymorphism.

\subsubsection{Qualifier Substitution and Growth}\label{sec:qual-subst-growth}
The base substitution operation \(q[p/x]\) of qualifiers for variables is defined in \Cref{fig:saturation_overlap},
and we use it along with its homomorphic extension to types in dependent function application.
Substitution replaces the variable with the given qualifier, if present in the target.
We suggestively overload the substitution notation for \emph{qualifier growth} \(q[p/\qfresh]\).
Capturing the intuition behind the freshness marker \(\qfresh\), growth
adds \(p\) to \(q\) only if \(\qfresh\) is present, and otherwise ignores \(p\).
Growth abstracts over reduction steps that may allocate new reachable store locations in type preservation (\Cref{thm:soundness}).
We do not remove \(\qfresh\) to permit continuous growth.

\subsubsection{Dependent Application, Separation and Overlap}\label{sec:depend-appl}
Function applications are typeable by rules \rulename{t-app} and \rulename{t-app$\QFresh$}.
The former rule applies if the function's parameter is non-fresh (\(\QFresh\notin p\))
and it matches the argument, \ie, the argument qualifier reaches only bound variables and will not
increase at run time. Applications in \rulename{t-app} are dependent, substituting the function and
argument variable in the type and qualifier of the codomain with the given qualifiers
(see \Cref{sec:motiv:depapp}).

Rule \rulename{t-app$\QFresh$}
applies to cases where the argument's qualifier is bigger than what the function type assumes, or is
expected to grow bigger due to the freshness marker \(\QFresh\).
These cases require more nuanced treatment and restrictions on the degree of dependency in the
codomain. That is, if the argument or function is fresh, then the codomain's type \(\ty{U}\)
may not be dependent on the respective variable. Otherwise, type preservation is lost
due to the potential growth with fresh runtime locations.
In total, there are four possible cases, and
we discuss two of them as specialized rules below (other cases are analogous). If neither the argument
nor the function is fresh, we obtain
$$
\inferrule{
   \Gamma \ts t_1 : \ty[q]{\left(f(x: \ty[p \overlap q]{T}) \to \ty{Q}\right)} \qquad
    \Gamma \ts t_2 : \ty[p]{T}\qquad \QFresh\notin p \qquad \QFresh\notin q
  }{
    \Gamma \ts t_1~t_2 : Q[p/x, q/f]
  }
  \quad (\textsc{t-dapp})
$$
which permits unconstrained dependency in the codomain.
If both the argument and function are fresh, we obtain
$$
\inferrule{
    \Gamma \ts t_1 : \ty[\starred{q}]{\left(f(x: \ty[p \overlap q]{T}) \to \ty[r]{U}\right)} \qquad
    \Gamma \ts t_2 : \ty[\starred{p}]{T}\qquad \{x,f\}\cap\FV(U)= \varnothing
  }{
    \Gamma \ts t_1~t_2 : \ty[{r[\starred{p}/x, \starred{q}/f]}]{U}
  }
  \quad (\textsc{t-ndapp})
$$
which requires that neither \(x\) nor \(f\) occur freely the codomain type \(U\)
(as in \citeauthor{DBLP:journals/pacmpl/BaoWBJHR21}).

In all instances of \rulename{t-app$\QFresh$},
since \(p\)
is potentially bigger than the function codomain, we need to check for \emph{observable
  separation/overlap} between function and argument, \ie, the portion of \(p\)
that the function can observe should conform with the function parameter.  This is the only place in
the type system requiring fully reflexive-transitive reachability using the \emph{overlap operator}
\(p\overlap q\)
(\Cref{fig:saturation_overlap}), which is the intersection of the smallest saturated reachability
sets of \(p\)
and \(q\), always including \(\qfresh\) to indicate that the argument is allowed to have
a bigger qualifier than the domain.
For the type safety proof, it is also sufficient to just demand
any saturated supersets.

Both function application rules impose an observability restriction on the codomain qualifier
\(r \subq \starred{\flt},x,f\),
which is to ensure that the resulting qualifier of term typings is always observable under \(\flt\)
(\Cref{lem:has_type_filter}), a critical property for the substitution lemmas and type soundness
proof.

\subsubsection{Mutable References}\label{sec:mutable-refs}
The \(\lambda^*\)
system by \citet{DBLP:journals/pacmpl/BaoWBJHR21} cannot express nested references without the
addition of a flow-sensitive effect system. Although extending it with an effect system is possible,
our type system readily supports a limited form of nested references by
means of reachability and the fresh/non-fresh distinction.  Qualifiers in reference types need to be
non-fresh in \rulename{t-ref}, \rulename{t-deref}, and \rulename{t-assgn}.
On the outside, reference allocations \rulename{t-ref} track the referent's non-fresh qualifier and \(\QFresh\),
because the final result will be a fresh new store location, which will be added to the qualifier. A
limitation of this model is the invariance of the referent's qualifier, so that only values with
identical qualifier can ever be assigned in \rulename{t-assgn}.
Therefore the referent's qualifier must be chosen large enough when introduced.
Invariance is also reflected in the
subtyping rule for references, discussed next.

\subsubsection{Subtyping}\label{sec:subtyping}
We distinguish subtyping between qualifiers \(q\), ordinary types \(T\), and qualified types \(Q\),
where the latter two are mutually dependent. Subtyping is assumed to be well-scoped under the typing
context \(\G\), \ie, types and qualifiers mention only variables bound in \(\G\),
and so do its typing assumptions.  Qualified subtyping \rulename{sq-sub} just forwards to the other
two judgments for scaling the type and qualifier, respectively.

\paragraph{Qualifier Subtyping}
Qualifier subtyping includes the subset relation \rulename{q-sub}, the two contextual rules
\rulename{q-self} and \rulename{q-var}, and transitivity \rulename{q-trans}.
Rule \rulename{q-self} is inherited from \citet{DBLP:journals/pacmpl/BaoWBJHR21}, and used for
abstracting the qualifiers of escaping closures (see examples in
\Cref{sec:motiv:funtype} and \Cref{sec2:typeabs}), \ie, if a function self
reference \(f\) and its assumed qualifier \(q\) occur in some qualifier context, then we may delete \(q\)
and just retain \(f\), because \(q\) may contain captured variables that are not visible in an outer scope.
Rule \rulename{q-var} is new here and critical for one-step reachability: a qualifier \(p,x\)
is more precise than \(p,q\) since substitution may replace \(x\) with a smaller qualifier than \(q\)
later (cf.~\Cref{sec:motiv:precise}).  This is only valid if \(\QFresh\notin q\),
because otherwise, \(x\) could be replaced later with a larger set than \(q\) and we would lose track of it.
The same restriction applies to \rulename{q-self}.

\paragraph{Ordinary Subtyping}
Subtyping rules for base types \rulename{s-base}, reference types
\rulename{s-ref}, and function types \rulename{s-fun} are standard modulo
qualifiers.
Reflexivity and transitivity are both admissible for subtyping on
ordinary and qualified types.
References are invariant in the enclosed qualifier and
equivalent in the value, expressed by bidirectional subtype constraints.
Function types are contravariant in the domain, and covariant in the codomain, as usual.  Due to dependency in
the codomain, we are careful to extend the context with the smaller argument type and self
reference.  Importantly, the function self-reference added to the context only carries the
\(\QFresh\) marker.
This distinguishes computationally relevant self-references introduced by term
typing in \rulename{t-abs} from synthetic ones for subtyping. Only the former
is eligible for abstraction by function self-references.

\subsection{Dynamic Semantics and Stores}\label{sec:maybe:dynamic}

The \maybelang-calculus adopts the standard call-by-value reduction of the \(\lambda\)-calculus
with mutable references and a store (\Cref{fig:maybe:semantics}). Term typing and subtyping change accordingly to
include store typings \(\Sigma\),
and both qualifiers and observations may now include store locations from \(\DOM(\Sigma)\).
Typing a location value \rulename{t-loc} requires that it be observable, along with the full
qualifier of the referent (\(q,\ell\subq\flt\)).
This model implements the fully transitive reachability notion for store locations instead of
one-step reachability (in contrast to variables, \Cref{sec:laziness}), as we never substitute store locations
and thus do not alter the assumed qualifiers in the store typing \(\Sigma\). The well-formedness
predicate \(\WF{\Sigma}\)  ensures that all assumptions in \(\Sigma\)
are closed and have transitively closed qualifiers consisting only of other store locations.
Well-formedness is required by \Cref{coro:preservation_separation} to ensure fully disjoint
reachability chains and object graphs.

\subsection{Metatheory}\label{sec:maybe:theory}

The \maybelang-calculus exhibits syntactic type soundness which we prove by standard progress and
preservation properties (\Cref{thm:progress,thm:soundness}). Type soundness implies the preservation
of separation corollary (\Cref{coro:preservation_separation}) as set forth by
\citet{DBLP:journals/pacmpl/BaoWBJHR21} for their \(\lambda^*\)-calculus.
It is a memory property certifying that the results of well-typed \maybelang terms with disjoint
qualifiers indeed never alias. Below, we discuss key lemmas required for the
type soundness proof, which has been proved in Coq.
Due to space limitations, we elide standard properties such as weakening
and narrowing.

\subsubsection{Observability Properties} \label{sec:maybe:obs}
Reasoning about substitutions and their interaction with overlap/separation in preservation lemmas
requires that the qualifiers assigned by term typing
are observable. The following lemmas are proved by induction over the respective typing derivations:
\begin{lemma}[Observability Invariant]\label{lem:has_type_filter}
Term typing always assigns observable qualifiers, \ie,
if $\cx[\flt]{[\G\mid\Sigma]} \ts t : \ty[q]{T}$, then $q\subq \starred{\flt}$.
\end{lemma}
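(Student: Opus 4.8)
The plan is to proceed by induction on the derivation of $\cx[\flt]{[\G\mid\Sigma]} \ts t : \ty[q]{T}$, inspecting each term typing rule (\Cref{fig:maybe:typing}, together with \rulename{t-loc} from \Cref{fig:maybe:semantics}) and checking that the conclusion's qualifier is contained in $\starred{\flt}$. Most rules dispatch immediately. For \rulename{t-cst} and \rulename{t-assgn} the assigned qualifier is $\qbot = \varnothing$, contained in every set. For \rulename{t-var} the qualifier is $\{x\}$ with side condition $x\in\flt$, and for \rulename{t-loc} it is $q,\ell$ with premise $q,\ell\subq\flt$; both therefore sit in $\flt\subq\starred{\flt}$. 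Rule \rulename{t-deref} assigns $p$ and carries the premise $p\subq\flt$, \rulename{t-abs} assigns the function's own $q$ bounded by the premise $q\subq\flt$, and \rulename{t-sub} carries $q\subq\starred{\flt}$ verbatim as a hypothesis. In \rulename{t-ref} the qualifier is $\starred{q}$ with $\QFresh\notin q$; the induction hypothesis gives $q\subq\starred{\flt}$, and $\QFresh\notin q$ then forces $q\subq\flt$, whence $\starred{q}\subq\starred{\flt}$.

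The cases that require real work are the two application rules \rulename{t-app} and \rulename{t-app$\QFresh$}, whose conclusions assign the \emph{substituted} qualifier $r[p/x, q/f]$ with $Q=\ty[r]{U}$. The key step is a monotonicity property of qualifier substitution (\Cref{fig:saturation_overlap}): inspecting its two defining cases shows that whenever $r\subq A\qlub\{x,f\}$, $p\subq A$, and $q\subq A$, we have $r[p/x,q/f]\subq A$, since every atom surviving the substitution is either already in $A$, or contributed by $p$ in place of $x$, or by $q$ in place of $f$. I would instantiate this with $A=\starred{\flt}$: the bound $r\subq\starred{\flt,x,f}=\starred{\flt}\qlub\{x,f\}$ is a premise of both rules, while $p\subq\starred{\flt}$ and $q\subq\starred{\flt}$ follow from the induction hypotheses applied to the typing derivations of $t_2$ and $t_1$. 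Combining, $r[p/x,q/f]\subq\starred{\flt}$. The specialized variants \textsc{t-dapp} and \textsc{t-ndapp} are instances of \rulename{t-app$\QFresh$} and need no separate treatment.

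The main obstacle is thus not the induction itself but stating and proving the substitution monotonicity lemma in a form that is agnostic to whether $\QFresh$ occurs in $p$, $q$, or $r$. Once that is in place, freshness is handled uniformly: the target bound $\starred{\flt}$ already contains $\QFresh$, and the induction hypotheses bound the substituends $p,q$ by $\starred{\flt}$ rather than by $\flt$, so no case analysis on the marker is needed in the application cases. A minor point to verify is that the codomain substitution $Q[p/x,q/f]$ indeed uses the base substitution operator of \Cref{fig:saturation_overlap} and its homomorphic extension to types, so that the lemma applies verbatim; the growth operator $q[p/\qfresh]$ is irrelevant here, as no term typing rule invokes it in its conclusion's qualifier.
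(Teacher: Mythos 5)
Your proposal is correct and follows exactly the route the paper takes: the paper proves this lemma ``by induction over the respective typing derivations,'' and your case analysis (including the observation that the only nontrivial cases are the application rules, handled via monotonicity of qualifier substitution under the premise $r\subq\starred{\flt},x,f$ together with the induction hypotheses bounding $p$ and $q$) is a faithful elaboration of that argument.
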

\noindent Well-typed values cannot observe anything about the context beyond their assigned qualifier:
\begin{lemma}[Tight Observability for Values]\label{lem:values_tight}
If \(\ \cx[\flt]{[\G\mid\Sigma]} \ts v : \ty[q]{T}\), then \(\cx[q]{[\G\mid\Sigma]} \ts v : \ty[q]{T}\).
\end{lemma}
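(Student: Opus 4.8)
The plan is to proceed by induction on the derivation of $\cx[\flt]{[\G\mid\Sigma]} \ts v : \ty[q]{T}$. The guiding observation is that $\flt$ enters the value-introduction rules only through side conditions that bound the assigned qualifier from above; since the assigned qualifier coincides with the conclusion's qualifier $q$, those side conditions become reflexive once we instantiate the observation to $q$ itself. The only genuinely interactive case is subsumption, where the conclusion's qualifier differs from the one produced by the underlying value-introduction rule.

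For the value-introduction cases I would re-derive the same judgment verbatim with $\flt$ replaced by $q$. In \rulename{t-cst} we have $q=\qbot$ and there is no constraint on the observation, so the case is immediate. In \rulename{t-loc} the sole use of $\flt$ is the premise $q,\ell\subq\flt$; taking $\flt := q,\ell$ (the conclusion qualifier) makes it reflexive, and the remaining premises are independent of $\flt$. In \rulename{t-abs} (and, for \polylang, the analogous type-abstraction rule), the body is typed under its own observation $q,x,f$, which does not mention the outer $\flt$ at all; the only outer constraint is $q\subq\flt$, discharged by setting $\flt := q$. Hence each value former re-types under the tightened observation with an unchanged subderivation.

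The heart of the argument is \rulename{t-sub}. Here we have $\cx[\flt]{[\G\mid\Sigma]} \ts v : \ty[q_0]{T_0}$ together with $\G\ts\ty[q_0]{T_0} <: \ty[q]{T}$ and the side condition $q\subq\starred{\flt}$. Subtyping is independent of the observation, so the subtyping premise is reusable as-is. By the induction hypothesis we obtain the tight typing $\cx[q_0]{[\G\mid\Sigma]} \ts v : \ty[q_0]{T_0}$. To close the goal we would re-apply \rulename{t-sub} under observation $q$, which reduces to producing the premise $\cx[q]{[\G\mid\Sigma]} \ts v : \ty[q_0]{T_0}$ (the side condition $q\subq\starred{q}$ is then trivial). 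Thus the whole case collapses to a single obligation: transporting the value typing across the change of observation from $q_0$ to $q$, given only $\G\ts q_0 <: q$.

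The main obstacle is exactly this transport step, and I expect to discharge it by a nested induction on the qualifier-subtyping derivation $\G\ts q_0 <: q$, assisted by a routine observation-weakening lemma (typing is monotone under enlarging $\flt$) and by the observability invariant \Cref{lem:has_type_filter}. The plain rule \rulename{q-sub} gives $q_0\subq q$ and is handled directly by weakening. The delicate rules are the contextual ones, \rulename{q-var} and \rulename{q-self}, which relate $q_0$ and $q$ not by inclusion but by looking through a variable (respectively a function self-reference) to its stored reachability set. Reconciling these steps with the tightened observation is where the side conditions $\QFresh\notin q$ on \rulename{q-var} and \rulename{q-self} are essential: they certify that the looked-through reachability is fully observable and cannot grow, so that the value's free variables remain accounted for after a name is replaced by its reachability, keeping $v$ typeable under the new observation $q$. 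I anticipate that carefully threading this non-freshness invariant through the nested induction---rather than any single rule---will be the crux of the proof.
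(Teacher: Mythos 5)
Your top-level strategy---induction on the typing derivation, with the value-introduction rules discharged by observing that $\flt$ only enters them through upper-bound side conditions that become reflexive at $\flt := q$---matches the paper, which offers no more detail than ``by induction over the typing derivation.'' The \rulename{t-cst}, \rulename{t-loc}, and \rulename{t-abs} cases are handled correctly, and you are right that subtyping can be reused unchanged since it does not consult the observation.

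The gap is in the \rulename{t-sub} case, and it is not merely that the ``transport'' step is left open: the intermediate obligation you reduce to is unprovable in exactly the subcases you flag as delicate. You propose to obtain $\cx[q]{[\G\mid\Sigma]} \ts v : \ty[q_0]{T_0}$ and then re-apply \rulename{t-sub}. But by \Cref{lem:has_type_filter} that judgment forces $q_0 \subq \starred{q}$, and the hypothesis $\G\ts q_0 <: q$ does not yield this when the subtyping step is \rulename{q-var} or \rulename{q-self}: those rules rewrite $p,x$ to $p,q'$, so the variable $x$ is dropped from the target qualifier and hence from the tightened observation, while it may still occur in $q_0$. A minimal instance: with $x:\ty[\qbot]{T'}\in\G$, a constant can be typed $\ty[\{x\}]{B}$ by \rulename{t-cst} plus \rulename{q-sub} and then $\ty[\qbot]{B}$ by \rulename{q-var}; the final goal $\cx[\qbot]{[\G\mid\Sigma]}\ts c : \ty[\qbot]{B}$ holds trivially, but your intermediate judgment $\cx[\qbot]{[\G\mid\Sigma]}\ts c : \ty[\{x\}]{B}$ contradicts \Cref{lem:has_type_filter}. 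The $\QFresh\notin q'$ side conditions cannot rescue this: they control qualifier \emph{growth}, not the loss of observability of a name that the intermediate qualifier still mentions. So the case must be reorganized rather than threaded more carefully---either by arguing directly from the introduction rule at the root of the derivation (collapsing the tower of \rulename{t-sub}s into one and never materializing the intermediate qualified type under the tightened observation), or by proving the lemma in the setting where it is actually consumed by the substitution lemma, namely closed values under $\G=\varnothing$, where qualifier subtyping degenerates to set inclusion over locations and the whole case follows from monotonicity of typing in $\flt$.
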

\noindent It is easy to see that any observation for a function \(\lambda f(x).t\)  will at least track
the free variables of the body \(t\).
Finally, well-typed values are always non-fresh in the following sense:
\begin{lemma}[Values are Non-Fresh]\label{lem:values_nonfresh}
If \(\ \cx[\flt]{[\G\mid\Sigma]} \ts v : \ty[q]{T}\), then \(\cx[\flt]{[\G\mid\Sigma]} \ts v : \ty[q\setminus\QFresh]{T}\).
\end{lemma}
\noindent This lemma is important for substitution, and asserts that values
only reach statically fully known variables and locations in context. That is, we may safely assume
that values are never the source of \(\QFresh\), and it can only stem from subsumption, which we
may undo by \Cref{lem:values_nonfresh}. Ruling out \(\qfresh\) for values ensures
that we do not accidentally add it when it is expected to be absent in a substitution target \(q\).
The absence indicates that a substitution on \(q\) will not increase it with fresh locations.

\subsubsection{Substitution Lemma}
We consider type soundness for closed terms and apply ``top-level'' substitutions, \ie, substituting
closed values with qualifiers that do not contain term variables, but only store locations.
The proof of the substitution lemma  critically relies on the distributivity of
substitution and the overlap operator (\Cref{fig:saturation_overlap}), which
is required to proceed in the \rulename{t-app\QFresh} case:
\begin{lemma}[Top-Level Substitutions Distribute with Overlap]\label{lem:subst_commutes_overlap}
$$
\infer{
  x: \ty[q]{T}\in\G\qquad \theta = [p/x] \qquad p,q\subq\starred{\dom(\Sigma)} \qquad p \qglb \starred{\flt} \subq q\qquad \qsat{r},\qsat{r'}\subq\starred{\flt}
}{
  (r \overlap r')\theta = r\theta\overlap r'\theta
}
$$
\end{lemma}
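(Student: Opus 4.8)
The plan is to unfold the overlap operator on both sides and reduce the claim to a statement purely about saturated qualifiers, and then to establish two orthogonal facts: that saturation commutes with the top-level substitution $\theta = [p/x]$, and that $\theta$ distributes over the intersection of saturated sets. By the definition $p \overlap q = \starred{(\qsat{p} \qglb \qsat{q})}$ (\Cref{fig:saturation_overlap}), both sides carry the marker $\qfresh$ via the outer star, which the substitution never removes since $x \neq \qfresh$. Hence it suffices to prove equality of the non-fresh parts, i.e. $(\qsat{r} \qglb \qsat{r'})\theta = \qsat{r\theta} \qglb \qsat{r'\theta}$ up to the presence of $\qfresh$; I will write $=_\qfresh$ for this relation. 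I would then split the argument into the two lemmas below and chain them.

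The first, and the main obstacle, is a commutation lemma: for any $s$ with $\qsat{s} \subq \starred{\flt}$, one has $\qsat{s\theta} =_\qfresh (\qsat{s})\theta$. Both inclusions here are delicate, because naively applying $\theta$ to an already-saturated set neither re-saturates the store locations freshly introduced by $p$ nor forgets the reachability that $x$ contributed. The proof rests on three ingredients: (i) by $\WF{\Sigma}$, every store location carries a transitively closed qualifier mentioning only locations, so location saturation is invariant under $\theta$, which only rewrites the variable $x$; (ii) the reachability relation sends $x$ one-step to exactly its assumed qualifier $q$, and the conformance premise $p \qglb \starred{\flt} \subq q$ together with $p,q \subq \starred{\dom(\Sigma)}$ ties the observable reachability introduced by $p$ back to $q$; and (iii) the observability bound $\qsat{s} \subq \starred{\flt}$ prevents spurious unobservable locations from entering the picture. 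I would prove it by induction on reachability chains following the inductive characterization of $\qsat{\cdot}$, with a case split on whether $x$ occurs in $s$ and on whether the chased element is a variable or a location.

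The second ingredient is distributivity of $\theta$ over intersection on saturated, observable qualifiers: $(\qsat{r} \qglb \qsat{r'})\theta =_\qfresh (\qsat{r})\theta \qglb (\qsat{r'})\theta$. This is \emph{not} the naive distributivity of set substitution, which fails in general, and it is exactly the stated hypotheses that rescue it. I would case-split on membership of $x$ in $\qsat{r}$ and $\qsat{r'}$. When $x$ lies in both, the introduced set $p$ is absorbed symmetrically, using $(X \qlub p) \qglb (Y \qlub p) = (X \qglb Y) \qlub p$. When $x$ lies in exactly one, say $x \in \qsat{r}$ but $x \notin \qsat{r'}$, the only possible discrepancy is the cross term $p \qglb \qsat{r'}$; since $\qsat{r'} \subq \starred{\flt}$, the conformance premise gives $p \qglb \qsat{r'} \subq p \qglb \starred{\flt} \subq q$, while saturation of $\qsat{r}$ with $x \in \qsat{r}$ gives $q \subq \qsat{r}$, so the cross term already lies in $\qsat{r} \qglb \qsat{r'}$ modulo $\qfresh$. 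The remaining cases, with $x$ in neither set or the symmetric one, are immediate.

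Finally I would chain the two: the commutation lemma rewrites $\qsat{r\theta} \qglb \qsat{r'\theta}$ as $(\qsat{r})\theta \qglb (\qsat{r'})\theta$, the distributivity lemma folds this into $(\qsat{r} \qglb \qsat{r'})\theta$, and re-attaching the star on both sides yields the goal. I expect the bulk of the effort, and the point at which all hypotheses are consumed, to be the commutation lemma, specifically the inclusion showing that replacing $x$ by $p$ loses no element that $x$ transitively reached. This is precisely where conformance of $p$ to $q$, the non-freshness bookkeeping, and $\WF{\Sigma}$ must interlock, and it is the step most likely to require an auxiliary invariant (e.g.\ that the relevant saturated reachability of $p$ is captured by $q$) beyond the bare set manipulations.
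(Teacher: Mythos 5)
Your proposal matches the paper's own justification: the paper gives no detailed proof of this lemma (deferring to the Coq development), but its accompanying discussion identifies exactly the crux you isolate --- the asymmetric case where $x$ lies in only one of $\qsat{r}$, $\qsat{r'}$, rescued by the conformance condition $p \qglb \starred{\flt} \subq q$ together with $p,q$ being top-level --- and your cross-term calculation $p \qglb \qsat{r'} \subq p \qglb \starred{\flt} \subq q \subq \qsat{r}$ (modulo $\qfresh$) is the right way to discharge it. Your decomposition into a saturation/substitution commutation step plus distributivity over intersection is a reasonable elaboration, and you correctly flag that the commutation step is where an auxiliary invariant tying the saturation of $p$ back to $q$ (or requiring $p$ itself to be transitively closed in $\Sigma$) must be supplied, since the two sides of that equation are computed in different contexts.
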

\noindent Qualifier substitution does not generally distribute with set intersection, due to the problematic
case when the substituted variable \(x\) occurs in only one of the saturated sets \(\qsat{r}\)
and \(\qsat{r'}\). Distributivity holds if (1) we ensure that what is observed about the qualifier \(p\) we substitute for \(x\)
is bounded by what the context observes about \(x\), \ie, \(p \qglb \starred{\flt}\subq q\)
for \(x : \ty[q]{T}\in\G\), and (2) \(p,q\) are top-level as above.

In the type preservation proof, $\beta$-reduction substitutes both the
function parameter and self-reference in \rulename{t-abs} (\Cref{fig:maybe:typing}) for some values.
The two substitutions can be expressed by sequentially applying a general substitution lemma on one
variable:
\begin{lemma}[Top-Level Term Substitution]\label{lem:subst_term}
$$
\infer{%
   \cx[\flt]{[\G,x:\ty[q]{T}\mid\Sigma]} \ts t : \ty{Q}\qquad \cx[p]{[\varnothing\mid\Sigma]}\ts v : \ty[p]{T}\qquad \theta = [p/x]\\
    p,q\subq\starred{\dom(\Sigma)}\qquad p \qglb \starred{\flt} \subq q \qquad q = p \vee q = \starred{(p \qglb r)}
}{
    \cx[\flt\theta]{[\Gamma\theta\mid\Sigma]} \ts t[v/x] : \ty{Q}\theta
}
$$
\end{lemma}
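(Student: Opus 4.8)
The plan is to proceed by structural induction on the derivation of $\cx[\flt]{[\G,x:\ty[q]{T}\mid\Sigma]} \ts t : \ty{Q}$. For each typing rule I would apply the induction hypothesis to those premises whose context still carries $x:\ty[q]{T}$, then reassemble the conclusion after applying $\theta=[p/x]$, checking that the observability, freshness, and overlap side conditions survive the substitution. The cases \rulename{t-cst}, \rulename{t-ref}, \rulename{t-deref}, \rulename{t-assgn}, and \rulename{t-loc} are routine: constants and locations are untouched by $\theta$, substitution commutes homomorphically with the reference constructors, and the hypothesis $p,q\subq\starred{\dom(\Sigma)}$ ensures that the $\QFresh\notin q$ requirements of the reference rules are preserved, since no variable substitution can introduce $\QFresh$ into a top-level qualifier.

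The essential base case is \rulename{t-var} with $t=x$. Here $Q=\ty[x]{T}$, so $Q\theta=\ty[p]{T}$ and $t[v/x]=v$; I would take the given typing $\cx[p]{[\varnothing\mid\Sigma]}\ts v:\ty[p]{T}$ and weaken it to the context $\Gamma\theta$ and observation $\flt\theta$, using that $x\in\flt$ forces $p\subq\flt\theta$, applying \rulename{t-sub} if a qualifier adjustment is needed. For a variable $y\neq x$ the term is unchanged and $\theta$ only rewrites $x$ inside $y$'s assumed qualifier, so \rulename{t-var} in $\Gamma\theta$ suffices, with \Cref{lem:values_nonfresh} guaranteeing $v$ contributes no stray $\QFresh$.

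For \rulename{t-abs} I would push $\theta$ under the binders, renaming the self-reference and parameter to avoid $x$, and apply the induction hypothesis to the body in the extended context, observing that $\theta$ commutes with context extension and with its homomorphic action on the function type $\ty{F}$. The subsumption case \rulename{t-sub} relies on a companion fact that the subtyping judgments of \Cref{fig:maybe:subtyping} are stable under top-level substitution (an auxiliary induction on subtyping); I then reapply \rulename{t-sub} to the substituted subderivation, the side condition $q\subq\starred{\flt}$ being maintained since $\theta$ only introduces store locations and $p\subq\starred{\flt\theta}$.

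The crux is the application rules, in particular \rulename{t-app$\QFresh$}, where the domain carries an overlap $p'\overlap q'$ and the codomain qualifier has the dependent form $r[\,\cdot/x,\,\cdot/f]$. The key step is to show that $\theta$ commutes with the overlap check, which is precisely \Cref{lem:subst_commutes_overlap}; its preconditions $p,q\subq\starred{\dom(\Sigma)}$ and $p\qglb\starred{\flt}\subq q$ transfer verbatim from the present hypotheses, and the disjunction $q=p \vee q=\starred{(p\qglb r)}$ pins down the provenance of $q$ --- either a directly matched argument or an overlap-form parameter arising from \rulename{t-app$\QFresh$} --- which, together with the observability invariant \Cref{lem:has_type_filter}, supplies the saturation bounds $\qsat{r},\qsat{r'}\subq\starred{\flt}$ that the overlap lemma demands. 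I expect this to be the main obstacle: since qualifier substitution does not distribute with intersection in general, each of the four freshness combinations folded into \rulename{t-app$\QFresh$} must separately satisfy the distributivity preconditions, and the freshness-gated restriction $\QFresh\in p\Rightarrow x\notin\FV(U)$ must be shown to persist after the growth operation $r[\starred{p}/x]$. The non-fresh rule \rulename{t-app} is then the easier specialization in which $\QFresh\notin p$ eliminates the overlap subtlety and the dependent substitution on the codomain goes through directly.
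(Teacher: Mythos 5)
Your proposal is correct and follows essentially the same route as the paper: induction on the typing derivation, with the routine cases handled by monotonicity of substitution and the fact that $p$ consists only of store locations, the \rulename{t-sub} case discharged by an auxiliary substitution lemma for subtyping, and the \rulename{t-app$\QFresh$} case resting on the distributivity of substitution with the overlap operator (\Cref{lem:subst_commutes_overlap}). Your additional detail on how the side conditions $p \qglb \starred{\flt} \subq q$ and the disjunction on $q$ feed the preconditions of that lemma is consistent with the paper's (much terser) argument.
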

\begin{proof}
By induction over the derivation \(\cx[\flt]{[\G,x:\ty[q]{T}\mid\Sigma]} \ts t : \ty{Q}\).
Most cases are straightforward, exploiting that qualifier substitution is monotonic w.r.t.\
\(\subq\) and that the substitute \(p\) for \(x\) consists of store locations only.
The case \rulename{t-app\QFresh} critically requires \Cref{lem:subst_commutes_overlap} for
\((p \overlap q)\theta = p\theta \overlap q\theta\) in the induction hypothesis.
The case \rulename{t-sub} requires an analogous substitution lemma for subtyping (elided due to space
limitations).
\end{proof}
\noindent Just as above, the substitution lemma imposes the observability condition \(p \qglb \starred{\flt} \subq q\).
The condition \(q = p \vee q = \starred{(r \qglb p)}\) captures the two different cases
of substitution: (1) a precise substitution where the assumed qualifier \(q\) for \(x\) is identical to the
value's qualifier \(p\), \ie, the parameter in \rulename{t-app} or the function's self-reference
\(f\) in \rulename{t-app}/\rulename{t-app\qfresh},
or (2) a growing substitution for the parameter in \rulename{t-app\qfresh} with
overlap between \(p\) and the function qualifier \(r\), growing the result
by  \(p\setminus \qsat{r}\).

\subsubsection{Main Soundness Result}\label{sec:maybesoundness}

\begin{theorem}[Progress]\label{thm:progress}
 If \(\ \cx[\DOM(\Sigma)]{[\varnothing\mid\Sigma]}  \ts t : \ty{Q}\), then either \(t\) is a value, or
 for any store \(\sigma\) where \(\varnothing \mid \Sigma \ts \sigma\), there exists
 a term \(t'\) and store \(\sigma'\) such that \(t \mid \sigma \to t' \mid \sigma'\).
\end{theorem}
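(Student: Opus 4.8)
The plan is to prove progress by induction on the typing derivation \(\cx[\DOM(\Sigma)]{[\varnothing\mid\Sigma]} \ts t : \ty{Q}\), following the standard recipe for a call-by-value calculus with a store. Before the induction I would establish a \emph{canonical forms} lemma for closed values: a value typed at a function type \(\ty[q]{(f(x{:}P)\to Q)}\) must be a \(\lambda\)-abstraction, one typed at a reference type \(\ty[q]{(\TRef~R)}\) must be a location \(\ell\), and one typed at a base type \(B\) must be a constant. Since the variable context is empty, rule \rulename{t-var} can never conclude the derivation, and because every evaluation-position premise of \rulename{t-app}, \rulename{t-app$\QFresh$}, \rulename{t-ref}, \rulename{t-deref}, \rulename{t-assgn}, and \rulename{t-sub} reuses the same empty variable context, each such subterm is again closed and the induction hypothesis applies to it.

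The base cases are immediate. Whenever the derivation ends in a value-introduction rule — \rulename{t-cst}, \rulename{t-abs}, or \rulename{t-loc} — the term \(t\) is already a value, so the first disjunct holds. For \rulename{t-sub} the subject term is unchanged and only its qualified type is coarsened, so the induction hypothesis applies verbatim and whichever disjunct it yields is preserved.

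The interesting cases are the elimination and allocation forms, handled by applying the induction hypothesis to the immediate subterms, scrutinized left-to-right. If some subterm can take a step, I lift that step through the matching reduction context \(C\) from \Cref{fig:maybe:semantics} — one of \(\square~t\), \(v~\square\), \(\tref~\square\), \(!\square\), \(\square \coloneqq t\), or \(v \coloneqq \square\) — to obtain a step of the whole term under any store. If all subterms are values, I appeal to canonical forms to fix their shape and fire the corresponding computation rule: \rulename{ref} for \(\tref~v\), \rulename{deref} for \(!\ell\), \rulename{assign} for \(\ell \coloneqq v\), and \rulename{$\beta$} for \((\lambda f(x).t)~v\) in both \rulename{t-app} and \rulename{t-app$\QFresh$}. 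For the store-accessing rules \rulename{deref} and \rulename{assign}, I use the premise \(\varnothing\mid\Sigma\ts\sigma\) to conclude \(\DOM(\sigma)=\DOM(\Sigma)\); since canonical forms together with \rulename{t-loc} force \(\ell\in\DOM(\Sigma)\), the location indeed lies in \(\DOM(\sigma)\), discharging the side condition of the reduction rule, whereas \rulename{ref} simply chooses a fresh \(\ell\notin\DOM(\sigma)\).

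The main obstacle is the canonical forms lemma, because subsumption \rulename{t-sub} means a value may be typed through an arbitrary chain of subtyping steps rather than directly by its introduction rule. The resolution is an inversion argument that peels off \rulename{t-sub}: the ordinary-subtyping rules \rulename{s-base}, \rulename{s-ref}, and \rulename{s-fun} each relate a type only to another type of the \emph{same} head constructor, so subtyping can never convert between base, reference, and function types; tracing the derivation back to its value-introduction rule therefore preserves the head constructor and pins down the shape of the value. Notably, the reachability qualifiers and the freshness marker \(\QFresh\) play no role in progress — they constrain which applications typecheck and how qualifiers may grow, but never block reduction — so the qualifier side conditions in \rulename{t-app}/\rulename{t-app$\QFresh$} can be ignored here and are only exercised in preservation (\Cref{thm:soundness}).
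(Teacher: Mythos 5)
Your proposal is correct and follows exactly the route the paper takes: the paper's proof is stated simply as ``by induction over the typing derivation,'' and your expansion — canonical forms via inversion through \rulename{t-sub}, case analysis on the final rule, lifting steps through reduction contexts, and using the store typing to discharge the domain side conditions — is the standard elaboration of that one-line argument. No gaps.
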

\begin{proof}
By induction over the derivation \(\cx[\DOM(\Sigma)]{[\varnothing\mid\Sigma]}  \ts t : \ty{Q}\).
\end{proof}
\noindent Similar to \cite{DBLP:journals/pacmpl/BaoWBJHR21}, reduction preserves types up to qualifier growth
(cf.~\Cref{sec:qual-subst-growth}):
\begin{theorem}[Preservation]\label{thm:soundness}
  If \(\ \cx[\DOM(\Sigma)]{[\varnothing\mid\Sigma]}  \ts t : \ty[q]{T}\),
  and \(\varnothing \mid \Sigma \ts \sigma\), and \(t \mid \sigma \to t' \mid \sigma'\),
  then there exists \(\Sigma' \supseteq \Sigma\) and \(p \subq\DOM(\Sigma'\setminus\Sigma)\)
  such that \(\varnothing \mid \Sigma' \ts \sigma'\)
  and \(\cx[\DOM(\Sigma')]{[\varnothing\mid\Sigma']} \ts t' : \ty[{q[p/\qfresh]}]{T}\).
\end{theorem}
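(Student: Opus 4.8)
The plan is to argue by induction on the typing derivation $\cx[\DOM(\Sigma)]{[\varnothing\mid\Sigma]} \ts t : \ty[q]{T}$, with an inner case analysis on the step $t\mid\sigma\to t'\mid\sigma'$. Because reduction is presented with an evaluation context $C$, inversion on the step splits each typing case into a \emph{head} case ($C=\square$, so $t$ is itself a redex and the matching rule among \rulename{$\beta$}, \rulename{ref}, \rulename{deref}, \rulename{assign} fires) and \emph{congruence} cases (the redex sits in a proper subterm, which reduces by the inductive hypothesis). The witnesses are assembled per case: I take $\Sigma'=\Sigma$ and $p=\varnothing$ everywhere except in \rulename{ref}, where $\Sigma'=\Sigma,\ell:\ty[q_0]{T_0}$ and $p=\{\ell\}$. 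I would dispatch \rulename{t-sub} by applying the IH to the subject and re-subsuming (using that subtyping persists under the extended store and that $q[p/\qfresh]$ is monotone), and treat \rulename{t-cst}, \rulename{t-var}, \rulename{t-loc} vacuously since those terms do not step.

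For the head cases I would proceed as follows. The \rulename{$\beta$} case is the substitution-heavy one: inverting the value typing of $\lambda f(x).t$ (possibly under \rulename{t-sub} and \rulename{s-fun}, aided by \Cref{lem:values_tight}) against \rulename{t-app} or \rulename{t-app$\QFresh$} yields a body typing in a context extended with $x$ and $f$, to which I apply the Top-Level Term Substitution lemma (\Cref{lem:subst_term}) twice, once for the argument and once for the self-reference. Its side conditions — the observability bound $p\qglb\starred{\flt}\subq q$ and the shape constraint $q=p\vee q=\starred{(p\qglb r)}$ — are discharged from the premises of the application rule (the domain $\ty[p\,\overlap\,q]{T}$ and the overlap check), after using \Cref{lem:values_nonfresh} to strip any spurious $\qfresh$ from the argument value's qualifier. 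For \rulename{ref} I extend the store typing with the allocated value's (non-fresh, hence saturated) qualifier, re-establish $\WF{\Sigma'}$, and retype $\ell$ by \rulename{t-loc}; its qualifier $q_0,\ell$ subsumes via \rulename{q-sub} to $\starred{q_0}[\{\ell\}/\qfresh]=q_0,\qfresh,\ell$, which is exactly where the freshness marker materializes into a concrete location. The \rulename{deref} and \rulename{assign} cases use the store-typing hypothesis $\varnothing\mid\Sigma\ts\sigma$ directly: dereferencing retypes $\sigma(\ell)$ at $\Sigma(\ell)$, and assignment preserves store typing because reference qualifiers are invariant ($\QFresh\notin p$ in \rulename{t-assgn}), yielding $\tunit:\ty[\qbot]{\TUnit}$ with no growth.

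The congruence cases apply the IH to the reducing subterm and re-apply the same typing rule, and here the bookkeeping of the freshness marker is the crux. When, say, the function subterm in \rulename{t-app} reduces and its qualifier grows from $q_1$ to $q_1[p/\qfresh]$, re-applying \rulename{t-app} substitutes the \emph{grown} qualifier into the codomain, and I must show this coincides with growing the whole-term result, i.e.\ a commutation of the form $Q[p_2/x,(q_1[p/\qfresh])/f]=(Q[p_2/x,q_1/f])[p/\qfresh]$, together with survival of the overlap premise under growth (monotonicity of $\overlap$, backed by \Cref{lem:subst_commutes_overlap}). In the fresh-application subcases this interacts with the dependency restrictions $\QFresh\in p\Rightarrow x\notin\FV(U)$ and $\QFresh\in q\Rightarrow f\notin\FV(U)$, which are precisely what keep the substituted-away variable from reappearing under growth.

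The genuinely delicate obligation, and where I expect the main difficulty to concentrate, is \emph{tightness}: when $\QFresh\notin q$ the conclusion demands the \emph{exact} qualifier $q=q[p/\qfresh]$, so I must verify that no freshly allocated location can leak into a non-fresh result — it is not enough to merely type $t'$ at some larger qualifier. This is exactly what the pervasive $\QFresh\notin\cdot$ side conditions (in \rulename{t-ref}, \rulename{t-deref}, \rulename{t-assgn}, and the argument premise $\QFresh\notin p$ of \rulename{t-app}) are engineered to guarantee, whereas the \rulename{ref} and fresh-application cases genuinely do grow and rely on absorbing the new locations into an already-present $\qfresh$ via \rulename{q-sub}. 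Reconciling these two regimes — freshness-guarded tightness on one side and growth-by-subsumption on the other — while keeping the commutation of substitution and growth aligned with \Cref{lem:subst_commutes_overlap} and \Cref{lem:values_nonfresh}, is the principal obstacle; the remaining obligations (store well-formedness preservation, and the monotonicity facts about $[\,\cdot/\qfresh\,]$ and $\overlap$) are delicate but essentially mechanical.
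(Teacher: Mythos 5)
Your proposal follows essentially the same route as the paper: the paper's proof is stated simply as induction over the typing derivation, and the supporting machinery you invoke — applying the Top-Level Term Substitution lemma (\Cref{lem:subst_term}) twice in the $\beta$ case, using \Cref{lem:values_nonfresh} and \Cref{lem:subst_commutes_overlap} to discharge its side conditions, and absorbing newly allocated locations via the growth operation $q[p/\qfresh]$ — is exactly the set of lemmas the paper develops in \Cref{sec:maybe:theory} for this purpose. Your elaboration of the congruence-case commutation and the tightness obligation when $\QFresh\notin q$ is a faithful (and more detailed) account of what the paper's Coq mechanization must establish.
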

\begin{proof}
  By induction over the derivation \(\cx[\DOM(\Sigma)]{[\varnothing\mid\Sigma]}  \ts t : \ty[q]{T}\).
\end{proof}

\begin{corollary}[Preservation of Separation]\label{coro:preservation_separation}\
Interleaved executions preserve types and disjointness:\vspace{-10pt}

\infrule{
  \begin{array}{l@{\qquad}l@{\qquad}ll}
   \cx[\DOM(\Sigma)]{[\varnothing \mid \Sigma]} \ts t_1 : \ty[q_1]{T_1} &  t_1 \mid \sigma\phantom{'} \to t_1' \mid \sigma' & \varnothing \mid \Sigma \ts \sigma & \WF{\Sigma}\\[1ex]
\cx[\DOM(\Sigma)]{[\varnothing \mid \Sigma]} \ts t_2:\ty[q_2]{T_2} & t_2 \mid \sigma' \to t_2' \mid \sigma'' & q_1 \overlap q_2 \subq \{\vardiamondsuit\} &
  \end{array}
}{
  \begin{array}{ll@{\qquad}l@{\qquad}l}
\exists p_1\;p_2\;\Sigma'\;\Sigma''. & \cx[\DOM(\Sigma')\phantom{'}]{[\varnothing \mid \Sigma'\phantom{'}]} \ts t_1' : \ty[p_1]{T_1} & \Sigma'' \supseteq \Sigma' \supseteq \Sigma \\[1ex]
& \cx[\DOM(\Sigma'')]{[\varnothing \mid \Sigma'']} \ts t_2' : \ty[p_2]{T_2} & p_1 \overlap p_2 \subq \{\vardiamondsuit\}
  \end{array}
}
\end{corollary}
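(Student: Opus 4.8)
The plan is to derive the corollary from two sequential applications of Preservation (\Cref{thm:soundness}), reducing the statement to a single set-theoretic fact about the overlap operator under fresh store growth. First I would apply \Cref{thm:soundness} to \(t_1 \mid \sigma \to t_1' \mid \sigma'\), obtaining a store typing \(\Sigma' \supseteq \Sigma\), a fresh location set \(p_1'' \subq \DOM(\Sigma'\setminus\Sigma)\), with \(\varnothing\mid\Sigma'\ts\sigma'\) and \(\cx[\DOM(\Sigma')]{[\varnothing\mid\Sigma']} \ts t_1' : \ty[q_1[p_1''/\qfresh]]{T_1}\). Since \(\Sigma'\supseteq\Sigma\), store weakening lets me re-type \(t_2\) under \(\Sigma'\) with the same qualifier \(q_2\), so that the second reduction \(t_2\mid\sigma'\to t_2'\mid\sigma''\) can be fed to \Cref{thm:soundness} as well, yielding \(\Sigma''\supseteq\Sigma'\), \(p_2''\subq\DOM(\Sigma''\setminus\Sigma')\), \(\varnothing\mid\Sigma''\ts\sigma''\), and \(\cx[\DOM(\Sigma'')]{[\varnothing\mid\Sigma'']}\ts t_2' : \ty[q_2[p_2''/\qfresh]]{T_2}\). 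Setting \(p_1 = q_1[p_1''/\qfresh]\) and \(p_2 = q_2[p_2''/\qfresh]\), the two typing conclusions and the chain \(\Sigma''\supseteq\Sigma'\supseteq\Sigma\) hold verbatim; only the overlap bound \(p_1\overlap p_2\subq\{\qfresh\}\) remains.

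For that bound I would compute both saturations in the final store \(\Sigma''\) and record two stability facts. (i) \emph{Saturation of old locations is invariant under the later steps}: a single reduction changes the store only by \rulename{ref} (adding a fresh location whose referent, by well-formedness of the extended store typing, has a transitively closed qualifier over \(\DOM(\Sigma')\)) or by \rulename{assign} (which, by the qualifier-invariance of references built into \rulename{s-ref}/\rulename{t-assgn}, leaves the referent's qualifier unchanged). Hence the reachability edges out of any location already present in \(\Sigma'\) are preserved, so \(\qsat{q_1}\), \(\qsat{q_2}\), and \(\qsat{p_1''}\) are the same whether computed at their point of creation or in \(\Sigma''\), and each stays within its own domain; in particular \(\qsat{p_1}\subq\DOM(\Sigma')\), which is disjoint from \(p_2''\). (ii) \emph{Fresh locations stay inside the reducing term's reach}: the crucial sub-lemma is \(\qsat{p_1''}\subq\qsat{q_1}\qlub p_1''\) and \(\qsat{p_2''}\subq\qsat{q_2}\qlub p_2''\). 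For a single step at most one cell is allocated, and its referent is a value occurring in the reducing term, so by \Cref{lem:values_nonfresh} and \rulename{t-ref} its qualifier is non-fresh and bounded by the term's pre-step reachable set; thus the new cell can only reach old locations already in \(\qsat{q_i}\).

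With (i) and (ii) in hand the bound is a distributivity calculation. Using that saturation distributes over \(\qlub\), I expand \(\qsat{p_1}\qglb\qsat{p_2} = (\qsat{q_1}\qlub\qsat{p_1''}) \qglb (\qsat{q_2}\qlub\qsat{p_2''})\) into four cross terms. The term \(\qsat{q_1}\qglb\qsat{q_2}\) is \(\subq\{\qfresh\}\) directly from the hypothesis \(q_1\overlap q_2 = \starred{(\qsat{q_1}\qglb\qsat{q_2})}\subq\{\qfresh\}\) (legitimately re-used in \(\Sigma''\) by (i)); the two mixed terms are handled by substituting the bounds from (ii) and using that \(p_1''\), \(p_2''\) are fresh, hence disjoint from the opposite old reachable set, collapsing them back onto \(\qsat{q_1}\qglb\qsat{q_2}\) plus empty intersections; and \(\qsat{p_1''}\qglb\qsat{p_2''}\) reduces likewise, with the residual \(p_1''\qglb p_2''=\varnothing\). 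Therefore \(\qsat{p_1}\qglb\qsat{p_2}\subq\{\qfresh\}\), and applying the \(\starred{(\cdot)}\) wrapper of the overlap operator (\Cref{fig:saturation_overlap}) gives \(p_1\overlap p_2 = \starred{(\qsat{p_1}\qglb\qsat{p_2})}\subq\{\qfresh\}\), as required.

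I expect the main obstacle to be the containment sub-lemma (ii): it is precisely the step where the freshness discipline must pay off, hinging on two facts that need to be isolated carefully — that the qualifier assigned to a freshly allocated cell by \rulename{t-ref} over-approximates only what the stored value already reaches, and that the \emph{other} computation cannot smuggle one term's fresh locations into its own reachable set, since those locations lie outside its qualifier and assignment cannot create fresh-pointing edges into already-existing cells. Subsidiary care is also needed to confirm that store well-formedness is preserved along both reduction steps, so that the saturation arguments in (i) are legitimate; this follows from the way \Cref{thm:soundness} extends the store typing with closed qualifiers.
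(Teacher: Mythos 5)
Your proposal is correct and follows essentially the same route as the paper: the paper's proof is a two-line sketch that applies Preservation (\Cref{thm:soundness}) sequentially and then appeals to the fact that each step grows the qualifier by at most fresh new locations, which is exactly your decomposition. Your write-up merely expands the second half of that sketch (stability of saturation under later steps and containment of the fresh locations' reach) into explicit sub-lemmas, which is consistent with what the mechanization would have to establish.
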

\begin{proof}
  By sequential application of Preservation (\Cref{thm:soundness}) and the fact that a reduction step
  increases the assigned qualifier by at most a fresh new location, thus preserving disjointness.
\end{proof}

\section{Reachability and Type Polymorphism}\label{sec:poly}

We extend the simply-typed reachability-polymorphic system
$\maybelang$ with type-and-qualifier abstraction in the style of $F_{<:}$
\cite{DBLP:journals/iandc/CardelliMMS94}.
The typing of this extension behaves the same as in standard \Fsub modulo
self-references and reachability sets.
As mentioned in \Cref{sec2:typeabs}, we simultaneously abstract over types and
qualifiers, because just abstracting over types leads to imprecise reachability
tracking for data-type eliminations.

\subsection{Syntax} \label{sec:qpoly:syntax}

\Cref{fig:poly} shows the syntax of \polylang as a \Fsub-style
extension of $\maybelang$.
Types now include the $\TTop$ type, type variables $X$, and universal types.
A universal type introduces a quantified type variable $X$ along with a
quantified qualifier variable $x$, which are both upper-bounded by a qualified
type $Q$.
It is important to read the combined quantification as an abbreviation
introducing the abstract type and qualifier independently, and they do
not need to be used together, \ie,
\( \forall (X <: T). \forall (x <: q). Q \equiv \forall (\ty[x]{X} <: \ty[q]{T}). Q \).
We choose the more compact syntax for readability since types and qualifiers
are often instantiated together.
Similar to function types, universal types have self-references, which are
useful when a polymorphic closure escapes its defining scope.
The body of a universal type is also qualified and can access the
self-reference $f$ of the universal type in addition to $x$.
Terms now include type abstractions and qualified type applications.
Type abstractions bind their own self-reference \(f\), type parameter \(X\), and qualifier
parameter \(x\) in the body \(t\).
Typing environments now include bounded type-and-qualifier variables
of the form \(\ty[x]{X} <: Q\).

\subsection{Static Semantics} \label{sec:qpoly:static}

The typing and subtyping rules of \polylang (\Cref{fig:poly}) are a
superset of those presented for \maybelang in \Cref{sec:maybe}.

\subsubsection{Typing Rules}

We add the typing rules for type abstractions and type applications.
The type system is defined declaratively in Curry-style, and hence
for type abstractions \rulename{t-tabs} we need to ``guess'' the whole universal
type and its qualifier.
Other parts are analogous to term abstraction typing (\Cref{sec:fun-lightweight-poly}).
Notably, observable separation naturally generalizes to type abstraction. That is,
the qualifier \(q\) constrains what the type abstraction's implementation can observe,
and \(P\)'s qualifier in \(\ty[x]{X}<: P\) determines observable overlap/separation
for instantiations of \(x\). Especially, if \(P\) mentions the freshness marker \qfresh,
then instantiations of \(x\) can mention unobserved variables.

Similar to function applications in \maybelang, there are two type application
rules: \rulename{t-tapp} for non-fresh dependent applications and
\rulename{t-tapp$\QFresh$} for restricted dependent applications.
Requiring non-freshness ensures that we pass a qualifier argument that
is bounded by other variables in the context.
Rule \rulename{t-tapp$\QFresh$}
is analogous to \rulename{t-app$\QFresh$}
(cf.~\Cref{sec:depend-appl}): If the argument/function qualifier is fresh, then the result
type \(U\) cannot be dependent on it. We impose observability constraints
on the codomain qualifier \(r\) to ensure the
observability invariant (\Cref{lem:has_type_filter})  for \polylang{}.

\begin{figure}[t]\small
\begin{mdframed}
\begin{minipage}[t]{1.0\textwidth}\small
  \judgement{Syntax}{\BOX{\polylang}}\vspace{-8pt}
  \[\begin{array}{l@{\qquad}l@{\qquad}l@{\qquad}l}
    T &::=& \dots \mid \TTop \mid X \mid \forall f(\ty[x]{X} <: Q). Q & \text{Types} \\
    t &::=& \dots \mid \TLam{X}{x}{T}{q}{t} \mid \TApp{t}{Q}{} & \text{Terms} \\
    \Gamma &::=& \dots \mid \Gamma, \ty[x]{X} <: Q & \text{Typing Environments}
  \end{array}\]
  \judgement{Term Typing}{\BOX{\strut\G[\flt] \ts t : \ty{Q}}}\\[1ex]
  \begin{minipage}[t]{1.0\linewidth}\small
    \typicallabel{t-tapp$\QFresh$}
    \infrule[t-tabs]{
      \cx[q,x,f]{\left(\G\ ,\ f: \ty{F}\ ,\ \ty[x]{X} <: \ty{P}\right)} \ts t : \ty{Q} \qquad
      F = \ty[q]{\left(\TAll{X}{x}{P}{}{Q}{}\right)}
      \qquad q \subq \varphi
    }{
      \G[\flt] \ts \TLam{X}{x}{T_1}{q_1}{t} : F
    }
  \vgap
    \infrule[t-tapp]{
      \G[\flt] \ts t : \ty[q]{\left(\TAll{X}{x}{T}{p}{Q}{}\right)} \qquad
      \QFresh \notin p \qquad \\
      p \subseteq \varphi \qquad
      r\subq\starred{\flt},x,f \qquad
      Q = \ty[r]{U} \qquad
    }{
      \G[\flt] \ts t [ \ty[p]{T} ] : Q[\ty[p]T/\ty[x]{X}, q/f]
    }
  \vgap
    \infrule[t-tapp$\QFresh$]{
      \G[\flt] \ts t : \ty[q]{\left(\TAll{X}{x}{T}{p \overlap q}{Q}{}\right)} \qquad
      \QFresh \in p \Rightarrow x\notin\FV(U) \qquad
      \QFresh \in q \Rightarrow f\notin\FV(U) \\
      p \subseteq \varphi \qquad
      r\subq\starred{\flt},x,f \qquad
      Q = \ty[r]{U} \qquad
    }{
      \G[\flt] \ts t [ \ty[p]{T} ] : Q[\ty[p]{T}/\ty[x]{X}, q/f]
    }
\end{minipage}\\[1ex]
  \judgement{Subtyping}{\BOX{\G \ts q <: q}\ \BOX{\G\ts\ty{T} <: \ty{T}}}\\[1ex]
\begin{minipage}[t]{.4\linewidth}
    \typicallabel{q-qvar}
    \infrule[q-qvar]{
      \ty[x]{X} <: \ty[q]{T} \in \Gamma \qquad
      \QFresh \notin q
    }{
      \Gamma \ts p,x <: p,q
    }
    \vgap
     \infrule[s-tvar]{\\
      \ty[x]{X} <: \ty[q]{T} \in \Gamma
    }{
      \Gamma \ts X <: T
    }
\end{minipage}
\begin{minipage}[t]{.03\linewidth}
\hspace{1pt}
\end{minipage}%
\begin{minipage}[t]{.57\linewidth}\small
  \typicallabel{s-top}
      \infrule[s-top]{}{
      \Gamma \ts T <: \TTop
    }
    \vgap
    \infrule[s-all]{
       \Gamma \ts Q <: O\\
      \Gamma\ ,\ f: \ty[\qfresh]{\left(\TAll{X}{x}{O}{}{P}{}\right)}\ ,\ \ty[x]{X} <: Q \ts P <: R
    }{
      \Gamma \ts \TAll{X}{x}{O}{}{P}{} <: \TAll{X}{x}{Q}{}{R}{}
    }
\end{minipage}
\end{minipage}
\caption{The syntax and typing rules of \polylang as an extension of \maybelang.}%
\label{fig:poly}
\end{mdframed}
\end{figure}

\subsubsection{Subtyping Rules}

Rule \rulename{s-top} is the standard rule for the $\TTop$ type.
Instead of defining a single subtyping rule for type-and-qualifier variables
that simply looks up the context in the premise, we disentangle it
to the \rulename{q-qvar} rule and \rulename{s-tvar} rule, distinguishing
the subtyping for qualifiers and ordinary types (cf.~\Cref{sec:subtyping}).
The former accounts for subtyping of qualifiers, allowing upcasting a
qualifier variable to its upper bound.
The latter is akin to standard type variable subtyping.
This disentanglement reflects the fact that we can upcast the quantified
qualifier and type independently, despite that they are introduced
together using a combined syntax.

For universal types \rulename{s-all}, we use the ``full'' subtyping rule for
richer expressiveness \cite{DBLP:journals/mscs/CurienG92} where type bounds are
contravariant.
This rule renders subtyping an undecidable relation \cite{DBLP:conf/popl/Pierce92}.
However, the choice of using the ``full'' variant is not essential to our
calculus and our main metatheoretic result is type soundness and preservation
of separation.
It should be possible to obtain a decidable fragment by building atop the
``kernel'' variant of \Fsub.  In this case, we would need to check subtyping of
the type argument explicitly, rather than relying on subsumption to upcast the
type of the type abstraction itself.
Due to self-references, we also extend the context with the smaller universal
type when subtyping the body, as in DOT~\cite{DBLP:conf/oopsla/RompfA16}.  Note
that \rulename{s-all} invokes subtyping on qualified types in its premises.

\subsection{Dynamic Semantics and Metatheory} \label{sec:qpoly:dynamics} \label{sec:qpoly:theory}

\Cref{fig:maybe:semantics} highlights the changes and new rules of
$\polylang$'s dynamic semantics, as an extension of $\maybelang$.
The reduction semantics is entirely standard compared to \Fsub, the only difference being
that type abstractions are recursive, so that type application \rulename{$\beta_T$} also substitutes
the type abstraction itself along with the argument.
Since location typing \rulename{t-loc} and store well-formedness
require closed types in store typings,  we additionally demand the absence of free type variables.

$\polylang$ enjoys the same soundness properties as $\maybelang$, \ie, progress,  preservation, and the separation of
preservation corollary (cf.~\Cref{sec:maybesoundness}). As for $\maybelang$, we have proved these results in Coq for \polylang.

\section{Comparison with Scala Capture Types}\label{sec:scalacapture}

A closely related work to ours is the recent Scala Capture Types (CT)
proposal~\cite{DBLP:conf/scala/OderskyBBLL21,DBLP:journals/corr/abs-2105-11896,
DBLP:journals/corr/abs-2207-03402} which also tracks sets of variables.
The system is tailored to programming with effects as non-escaping
capabilities, providing a lightweight form of effect polymorphism.
In this section, we inspect a few aspects of CT and compare
with reachability types.

\subsection{Capture Sets and The Universal Qualifier}

Similar to our system, capture types are built on top of \Fsub and
types can be annotated with variable sets, \ie,
$\{c_1, \dots, c_n\}\ T$ where $c_i$ is a variable representing the captured
capability.
Here is a (simplified) combinator for scoped exception handling using
capture types~\cite{DBLP:conf/scala/OderskyBBLL21}:

\begin{lstlisting}[language=Neutral]
  // declares the throw capability:
  class CanThrow
  // passes a tracked non-escaping capability to a block:
  def _try[A](block: (c: {*} CanThrow) -> A) = block(CanThrow())
\end{lstlisting}
Importantly, \lstinline[language=Neutral]|{*}| is a special marker for the top
element for qualifier subtyping in capture types, meaning some unknown set of variables is
tracked, \eg, \lstinline[language=Neutral]|{*} CanThrow| above.

While superficially similar, this top qualifier should not to be confused with
our \(\qfresh\) marker indicating a fresh/growing qualifier, and behaves
differently, as we will show later.

\subsection{Box Types for Non-Escaping Capabilities} \label{sec:ct:box}
Since @c@ represents a universal capability, we want to enforce that the
lifetime of capability \lstinline[language=Neutral]|c| passed to the given
\lstinline[language=Neutral]|block| is bound to the scope of
\lstinline[language=Neutral]|_try|.
In other words, it should not be leaked for any given
\lstinline[language=Neutral]|block|, \eg, by directly returning it or returning it indirectly
through an escaping closure. Capture types enforce this by requiring that the universal
capability @{*}@ cannot escape. This is in contrast to capabilities bound to a variable in an outer scope.

When combined with parametric type polymorphism,
\citet{DBLP:conf/scala/OderskyBBLL21, DBLP:journals/corr/abs-2207-03402}
propose to use a box type operator $\square\;T$ to turn qualified types
into proper, unqualified types, so that type variables only need to
range over proper types.
A boxed value \(\square[q\;T]\) capturing local variables in
\(q\) is upcast to \(\square[\{*\}\;T]\) when going out of scope.
Unboxing such types recovers the capture set.
Boxed values can only be unboxed if the contained qualifier \(q\) is a concrete
variable set, specifically excluding the top qualifier \lstinline[language=Neutral]|{*}|.
This provides a mechanism for statically enforcing non-escaping capabilities, \ie,
boxes are implicitly inserted at the abstraction boundary whenever the block's return
type \lstinline[language=Neutral]|A| is instantiated with a tracked type:
\begin{lstlisting}[language=Neutral]
  // illegal use (escaping capabilities):
  val x = _try { c => c }       //$\,$: $\square$[{*} CanThrow], error: cannot box/unbox
  val y = _try { c => () => c } //$\,$: $\square$[{*} (() -> CanThrow)], error : cannot box/unbox
\end{lstlisting}
On the outside, subtyping can only assign the \lstinline[language=Neutral]|{*}| qualifier to blocks
that return or capture the capability \lstinline[language=Neutral]|c|, since the captured variable
is not visible in the outer scope.

\subsection{Limitation: Tracking Fresh Values}
Let us now consider combining @_try@ with other resources
that have non-scoped introduction forms and should be tracked:
\begin{lstlisting}[language=Neutral]
  // assume freshAlloc() : {*} T
  val outer = freshAlloc()          // : {*} T is bound to {outer} T
  val z = _try { c => () => outer } // : $\square$[{outer} (() -> T)], ok: can box/unbox
\end{lstlisting}
The compiler rejects unboxing the
\lstinline[language=Neutral]|{*}| qualifier, but allows it for any more concrete one.
However, while the box type prevents capabilities from escaping, the compiler must infer and insert
box introductions and eliminations at declaration and use sites of polymorphic terms.  But more
importantly, the capture types mechanism does not support unbound fresh values well, \eg,
fresh allocations. The obvious choice is assigning the top-qualifier
\lstinline[language=Neutral]|{*}| to indicate some new value, but this is at odds with
boxing/unboxing, \eg, one cannot write
\begin{lstlisting}[language=Neutral]
  val fresh  = _try { c => freshAlloc() }                  // : $\square$[{*} T], error
  val fresh2 = _try { c => val f = freshAlloc(); () => f } // : $\square$[{*} (() -> T)], error
\end{lstlisting}
A potential workaround is having a separate global capability (\eg, @heap@) for allocations:
\begin{lstlisting}[language=Neutral]
  // fresh3 : {fresh2} T <: {heap} T
  val fresh3 = _try { c => heap.freshAlloc() } // : $\square$[{heap} T], ok: can unbox
\end{lstlisting}
This solution works well for effects-as-capabilities models, but it is
unsatisfactory for tracking aliasing and separation, \eg, all fresh values have
a common super type \lstinline[language=Neutral]|{heap} T|
which pollutes subtyping chains and leads to a loss of distinction between
separate fresh allocations.  In summary, if we want to track the lifetimes of a given class
of resources, these lifetimes must be properly nested in a stack-like manner
with the lifetimes of all other resources.

\subsection{The Reachability Approach}
Our \polylang{}-calculus can correctly handle fresh values, while at the same
time not requiring a box type.  This stems from (1) having an
intersection operator for reasoning about separation/overlap, and (2) a strong observability guarantee on
function types and universal types.
For instance, here is the type- and qualifier-polymorphic version of @_try@:
\begin{lstlisting}
  // $\forall \text{\texttt{A}}^{z} <: \TTop^\qfresh.\ ((\text{\texttt{CanThrow}}^\qfresh \to \text{\texttt{A}}^{\{z,\vardiamondsuit\}} )^\qfresh\to \text{\texttt{A}}^{\{z,\vardiamondsuit\}})$
  def _try[A$\trackfresh$](block: ((c: CanThrow$\trackfresh$) => A)$\trackfresh$): A = block(CanThrow())
\end{lstlisting}
The annotation on the @block@ parameter specifies that it is contextually fresh
for the implementation of @_try@ and thus entirely separate in terms of
transitive reachability.
We still reject the @x@ and @y@ examples above (\Cref{sec:ct:box}), but we now
permit @fresh: T$\trackfresh$@ and @fresh2: f(() => T$\trackset{f}$)@,
which correctly preserves the freshness of unnamed results.
Finally, \polylang permits finer-grained type distinctions when returning fresh
values, due to function self references:

\vspace{-10pt}
\begin{minipage}[t]{.4\linewidth}
\begin{lstlisting}
$\text{\texttt{\color{OliveGreen}// CT:\ \ \!\{*\} (() -> T)}}$
// $\polylang$: () => T$\trackfresh$
def retFresh() = () => freshAlloc()
\end{lstlisting}
\end{minipage}%
\begin{minipage}[t]{.6\linewidth}
\begin{lstlisting}
  $\text{\texttt{\color{OliveGreen}// CT:\ \ \!\{*\} (() -> T)}}$
  // $\polylang$: f(() => T$\trackset{f}$)
  def retConst() = { val f = freshAlloc(); () => f }
\end{lstlisting}
\end{minipage}\\
\polylang distinguishes between returning a fresh value on each invocation,
versus returning one and the same fresh value escaping a local scope,
whereas both are indistinguishable in capture types.

\subsubsection*{Outlook}
Compared to CT, reachability types exhibit similar expressiveness
and can support all uses of capture types.
Additionally, reachability types show richer expressiveness in a few key
aspects, especially the tracking of freshness and the guarantee of separation.
In future work, we propose to implement reachability types on top of
the experimental implementation \cite{ScalaCC}
of capture types in Scala 3, which would additionally provide a notion of separation.
We look forward to seeing how these two lines of similar ideas can benefit each other
in the future.

On the other hand, it is possible to further increase the expressiveness power
of reachability types by layering a flow-sensitive effect system on top of it.
With the notion of reachability, we can soundly express uniqueness, use-once
capabilities, and move semantics as described in \cite{DBLP:journals/pacmpl/BaoWBJHR21}.
These are useful to model low-level memory deallocation, one-shot
continuations, lock/unlock in concurrency programming, etc.

\section{Related Work} \label{sec:related}

\paragraph{Tracking Variables in Types}
The most directly related work of this paper is the original work on
reachability types \cite{DBLP:journals/pacmpl/BaoWBJHR21}.
This paper addresses the limitation of \citet{DBLP:journals/pacmpl/BaoWBJHR21}
and improves its expressiveness by introducing a new reachability tracking
mechanism, the freshness notion, and type-and-qualifier quantification.

Capture types \cite{DBLP:journals/corr/abs-2105-11896,
DBLP:journals/corr/abs-2207-03402, DBLP:conf/scala/OderskyBBLL21} is another
recent ongoing effort to integrate capability tracking and escaping checking
into Scala 3.
Several calculi have been proposed for capture types, \eg, $\mathsf{CF}_{<:}$
\cite{DBLP:journals/corr/abs-2105-11896} and $\mathsf{CC}_{<:\square}$
\cite{DBLP:journals/corr/abs-2207-03402, DBLP:conf/scala/OderskyBBLL21}.
In \Cref{sec:scalacapture}, we have discussed and compared with capture types.
To achieve capture tunnelling with universal polymorphism, the
$\mathsf{CC}_{<:\square}$ calculus uses boxing/unboxing, inspired
by contextual modal type theory (CMTT) \cite{DBLP:journals/tocl/NanevskiPP08}.
\citet{DBLP:conf/lpar/SchererH13} propose open closure types where function
types are attached with its defining lexical environment. It is used
for data flow analysis.
Several type systems \cite{DBLP:journals/pacmpl/JangGMP22,
DBLP:conf/aplas/KiselyovKS16, DBLP:journals/pacmpl/ParreauxVSK18} designed for
manipulating open code in metaprogramming also track free variables and
contexts in types, which are closely related to CMTT.

\paragraph{Escaping, Freshness, and Existential Types}
Works inspired by regions \cite{tofte1997region} use existential types for
tracking freshness or escaping entities, \eg, in Alias
types~\cite{DBLP:conf/esop/SmithWM00},
$L^3$~\cite{DBLP:journals/fuin/AhmedFM07}, and
Cyclone~\cite{grossman2002region}, analogous to our freshness marker
and self-reference.
As an analogy, one can think of a type with the freshness marker
@Ref@$\trackfresh$ as having an underlying quasi-existential type $\mu x.$@Ref@$\trackset{x}$
where the reference type tracks its own self-reference.
However, existentials for this purpose in our system would have to preserve
precise reachability information across temporary aliases created during
pack/unpack operations. That is, special facilities simulating freshness marker
and related constructs would need to be used in the implementation of
existentials, if those were taken as primitives.
Therefore, we believe the typing with self-references is more concise and
appropriate than existentials here, because we can use the \emph{same}
variable.
In addition, the use of self-references for escaping closures in our work makes
the reasoning succinct.
Similar to our calculi, type systems distinguishing second-class values can
also enforce non-escaping properties of effects or capabilities
\cite{osvald2016gentrification, DBLP:journals/pacmpl/BrachthauserSLB22,
DBLP:journals/pacmpl/BrachthauserSO20, DBLP:conf/ecoop/XhebrajB0R22, DBLP:journals/corr/abs-1201-0023}.
To regain the ability to return second-class capabilities,
\citet{DBLP:journals/pacmpl/BrachthauserSLB22} again make use of boxing and unboxing.

\paragraph{Separation}
The notion of separation and intersection operator (\Cref{sec:depend-appl})
used in reachability types is inspired by separation logic
\cite{DBLP:conf/lics/Reynolds02, DBLP:conf/csl/OHearnRY01} and its predecessors
\cite{DBLP:conf/popl/Reynolds78,reynolds1989syntactic,o1999syntactic}.
Bunched typing~\cite{DBLP:journals/jfp/OHearn03} and syntactic control of
interference \cite{DBLP:conf/popl/Reynolds78,reynolds1989syntactic,o1999syntactic}
allow reasoning about disjoint and shared resource access.
This is similar to reachability types, however, our system does not enforce that
the \emph{computations} of the function and arguments are disjoint,
but their final \emph{values} are disjoint (rule
\textsc{t-app$\QFresh$} in \Cref{fig:maybe:typing}).
Bunched typing enforces separation by splitting the typing context, whereas our work
enforces separation by checking disjointness of \emph{saturated} reachability
sets.
Bunched typing also lacks an explicit treatment of aliasing.

Uniqueness types~\cite{DBLP:journals/mscs/BarendsenS96,DBLP:conf/ifl/VriesPA06,DBLP:conf/ifl/VriesPA07}
ensure that there is no more than one reference pointing to the resource,
effectively establishing separation.
\citet{DBLP:conf/esop/MarshallVO22} present a language unifying linearity
\cite{wadler1990linear} and uniqueness.
Our base system does not directly track either linearity or uniqueness, instead,
flow-sensitive ``kill'' effects that disable all aliases can be integrated to
statically enforce uniqueness \cite{DBLP:journals/pacmpl/BaoWBJHR21}.

\paragraph{Polymorphism}
Reachability types and our variants feature lightweight reachability
polymorphism without introducing explicit quantification (cf. \Cref{sec:fun-lightweight-poly}).
Capture types \cite{DBLP:journals/corr/abs-2105-11896,
DBLP:journals/corr/abs-2207-03402, DBLP:conf/scala/OderskyBBLL21}
provide a similar flavor via dependent function application.
\citet{DBLP:journals/pacmpl/BrachthauserSO20,
DBLP:journals/pacmpl/BrachthauserSLB22} propose to represent effects as
capabilities, which yields a lightweight form of effect polymorphism
that requires little annotations.

Various forms of polymorphism exist in prior work on ownership types.
\citet{noble1998flexible} uses generic parameters to pass aliasing modes into a
class. But they do not allow ownership parameterization isolated from type
parameterization.
\citet{clarke2003object} further supports ownership polymorphism
via context parameters.
Similarly, Ownership Generic Java~\cite{DBLP:conf/oopsla/PotaninNCB06} allows
programmers to specify ownership information through type parameters.
Jo$\exists$~\cite{DBLP:conf/esop/CameronD09,DBLP:phd/ethos/Cameron09} combines
the theory of existential types with a parametric ownership type system, where
ownership information is passed as additional type arguments.
Generic Universe Types~\cite{DBLP:journals/toplas/DietlDM11} integrate the
owners-as-modifiers discipline with type genericity, effectively
separating the ownership topology from the encapsulation constraints.

\citet{DBLP:journals/mscs/CollinsonPR08} combine \textsf{F}-style polymorphism
with bunched logic, where universal types are discerned to be either additive
and multiplicative, but do not allow abstraction over additivity and
multiplicativity.
Our system $\polylang$ has quantified abstraction over qualifiers, which can be
used as an argument's reachability, permitting flexible instantiation of
either disjointness of sharing.

Constraints in alias types \cite{DBLP:conf/esop/SmithWM00} support a form of
location and store polymorphism, where the latter abstracts over irrelevant
store locations.
Our calculi implicitly abstract over contexts by baking the observability
notion into typing.

\paragraph{Ownership Types}

Ownership type systems \cite{noble1998flexible, DBLP:conf/oopsla/ClarkePN98}
are generally concerned with objects in OO programs and start from the
uniqueness restriction \cite{muller2000type, DBLP:journals/toplas/DietlDM11,
clarke2001simple, boyapati2002ownership, zhao2008implicit} and then selectively
re-introduce sharing in a controlled manner \cite{hogg1991islands,
naden2012type, clebsch2015ownership}.
Inherited from \citet{DBLP:journals/pacmpl/BaoWBJHR21}, our calculi
are designed for higher-order languages and deem sharing
and separation as essential substrates, on top of which an additional effect system
can be layered to achieve uniqueness and ownership transfer.
The focus of this paper is to address the limitations in expressiveness of
\citet{DBLP:journals/pacmpl/BaoWBJHR21} regarding reachability and type
polymorphism.

Rust's type system \cite{DBLP:conf/sigada/MatsakisK14} enforces strict
uniqueness of mutable references, while immutable references can be shared via
borrowing, known as the ``shared XOR mutable'' rule.
Mezzo \cite{DBLP:journals/toplas/BalabonskiPP16} is a language designed for
control aliasing and mutation and share some similarities with \polylang.
Mezzo tracks aliasing using singleton types \cite{DBLP:conf/esop/SmithWM00}.
When dealing with effects, Mezzo imposes restrictions like
Rust: mutable portions of the heap must have a unique owner, whereas
reachability types relax this constraint.
Moreover, Mezzo lacks the notion of separation between functions and arguments
and uses existential quantification to handle escaping functions that
capture local variables.
\polylang checks separation at the call site and has a lightweight mechanism to
track escaping functions via self-references.

Typestate-oriented programming \cite{DBLP:conf/oopsla/AldrichSSS09} and its combination with gradual typing
\cite{DBLP:journals/toplas/GarciaTWA14} also provides static flow-sensitive
reasoning or dynamic enforcement.

\section{Conclusion} \label{sec:conclusion}

In this work, we investigate limitations in expressiveness found in prior
reachability type systems \cite{DBLP:journals/pacmpl/BaoWBJHR21}.
We propose a new reachability type system $\maybelang$ that has lightweight,
precise, and sound reachability polymorphism.
Based on $\maybelang$, we add bounded quantification over types and qualifiers,
leading to a type-and-reachability-polymorphic calculus $\polylang$.
We have formalized these systems and proved the soundness and separation
guarantees in Coq.
We further discuss applying $\polylang$ to programming with capabilities and
compare with Scala capture types.
Our system subsumes both prior reachability types and the essence of
Scala capture types, while
exhibiting richer expressiveness in key aspects such as modeling freshness and
guaranteeing separation.

\bibliography{references}

\end{document}